\documentclass{article}
\usepackage{graphicx,amssymb,amsmath}
\usepackage{tda}
\bibliographystyle{plainurl}





\title{Learning Simplicial Complexes from Persistence Diagrams}

\author{Robin Lynne Belton\thanks{Depart. of Mathematical Sciences,
    Montana State U.}
    \and
        Brittany Terese Fasy\footnotemark[1]~\footnotemark[2]
        \and
        Rostik Mertz\thanks{School of Computing,
            Montana State
            U. \newline
   {\scriptsize {\tt \{robin.belton, brittany.fasy, david.millman, anna.schenfisch, jordan.schupbach\}@montana.edu}
    {\tt \{samuel.micka, daniel.salinas, lucia.williams\}@msu.montana.edu}
    {\tt rostik.mertz@student.montana.edu}}}
        \and
        Samuel Micka\footnotemark[2]
        \and
        David L. Millman\footnotemark[2]
        \and
        Daniel Salinas\footnotemark[2]
        \and
        Anna Schenfisch\footnotemark[1]
        \and
        Jordan Schupbach\footnotemark[1]
        \and
        Lucia Williams\footnotemark[2]
    }

\index{Belton, Robin Lynne}
\index{Fasy, Brittany Terese}
\index{Mertz, Rostik}
\index{Micka, Samuel}
\index{Millman, David L.}
\index{Salinas, Daniel}
\index{Schenfisch, Anna}
\index{Schupbach, Jordan}
\index{Williams, Lucy}


\begin{document}

\thispagestyle{empty}
\maketitle

\begin{abstract}
Topological Data Analysis (TDA) studies the ``shape'' of data. A common
topological descriptor is the persistence diagram, which encodes topological
features in a topological space at different scales. Turner, Mukherjee, and
Boyer showed that one can reconstruct a simplicial complex embedded in $\R^{3}$
using persistence diagrams generated from all possible height filtrations
(an uncountably infinite number of directions).  In this paper, we present an
algorithm for reconstructing plane graphs~$\simComp = (V,E)$ in $\R^2$, i.e., a
planar graph with vertices in general position and a straight-line embedding,
from a quadratic number height filtrations and their respective
persistence diagrams.
\end{abstract}

\section{Introduction}
\label{sec:intro}

Topological data analysis (TDA) is a promising tool in fields as varied as
materials science, transcriptomics, and neuroscience \cite{giusti2015clique,
lee2017quantifying, rizvi2017single }. Although TDA has been quite successful in
the analysis of point cloud data \cite{nicolau2011topology}, its purview extends
to any data that can be encoded as a topological space. Topological spaces can
be described in terms of their homology, e.g., connected components and ``holes.''
Simplicial complexes, in particular, are the most common representation of
topological spaces.
In this work, we focus our attention on a subset of simplicial complexes,
namely, plane graphs embedded in $\R^2$, with applications to shape
reconstruction.

In this paper, we explore the question, \emph{Can we reconstruct embedded
simplicial complexes from a finite number of directional persistence diagrams?}
Our work is motivated by \cite{turner2014persistent}, which proves that one can
reconstruct
simplicial complexes from an uncountably infinite number of
diagrams. Here, we make the first step towards providing a polynomial-time
reconstruction for simplicial complexes.
In particular, the main contributions of this paper are to set a bound on the number
of persistence diagrams required to reconstruct a plane graph
and to provide a polynomial-time algorithm for reconstructing the~graph.

\section{Related Work}

The problem of manifold and stratified space learning is an active research area
in computational mathematics. For example, Zheng et al.\
study the 3D reconstruction of plant roots from multiple 2D
images~\cite{rootreconstruction}. Their method uses persistent homology
to ensure the resulting 3D root model is connected.

Map construction algorithms reconstruct street maps as an embedded graph from a
set of input trajectories.  Three common approaches are Point Clustering,
Incremental Track Insertion, and Intersection Linking~\cite{maps}. Ge, Safa,
Belkin, and Wang develop a point clustering algorithm using Reeb graphs to
extract the skeleton graph of a road from point-cloud data~\cite{ge2011data}.
The original embedding can be reconstructed using a principal curve
algorithm~\cite{kegl2000learning}.  Karagiorgou and Pfoser give an incremental
track insertion algorithm to reconstruct a road network from vehicle trajectory
GPS data~\cite{ili}. Ahmed et  al.\  provide an incremental track insertion
algorithm to reconstruct road networks from point could data~\cite{iti}. The
reconstruction is done incrementally, using a variant of the Fr\'echet distance
to add curves to the current basis.  Ahmed, Karagiorgou, Pfoser, and Wenk
describe all these methods in~\cite{maps}. Finally, Dey, Wang, and Wang use
persistent homology to reconstruct embedded graphs. This research has also been
applied to input trajectory data~\cite{dey2018graph}. Dey et al.\ use
persistence to guide the Morse cancellation of critical simplices. In contrast,
the work presented here uses persistence to generate the diagrams that encode
the underlying graph.

Our work extends previous work on the persistent homology transform
(PHT)~\cite{turner2014persistent}. As detailed in \secref{preliminary},
persistent homology summarizes the homological changes for a filtered
topological space. When applied
to a simplicial complex embedded in $\R^d$, we can compute a different
filtration for every direction in $\sph^{d-1}$; this family of persistence
diagrams is referred to
as the persistent homology transform (PHT).  The map from a simplicial complex
to PHT is injective
\cite{turner2014persistent}.  Hence, knowing the PHT of a simplicial complex uniquely
identifies that complex. The proof presented in~\cite{turner2014persistent}
relies on the continuity of persistence diagrams as the direction of
filtration varies \emph{continuously}.

Our paper
bounds the number of directions by presenting an algorithm for reconstructing
the simplicial complex, when we are able to obtain persistence diagrams for a
given set of directions.  Simultaneous to our investigation, others have also
observed that the number of directions can be bounded using the
Radon transform;
see~\cite{ghrist2018euler,curry2018directions}.
In the work presented in the current  paper,
we seek to reconstruct graphs from their respective
persistence diagrams, using a geometric approach. We bound the number of directional
persistence diagrams since computing the PHT, as presented in~\cite{turner2014persistent},  requires the
computation of filtrations from an infinite number of possible directions. Our
work provides a theoretical guarantee of correctness for a finite subset of
directions by providing the reconstruction algorithm.

\section{Preliminary}
\label{sec:preliminary}
In this paper, we explore the question,
\emph{Can we reconstruct embedded simplicial complexes from a finite number of
directional persistence diagrams?} We begin by summarizing the necessary
background information, but refer the reader  to
\cite{edelsbrunner2010computational} for a more comprehensive~overview of
computational topology.

\paragraph{Simplices and Simplicial Complexes}
Intuitively, a $k$-simplex is a $k$-dimensional generalization of a triangle,
i.e., a zero-simplex is a vertex, a one-simplex is an edge connecting two
vertices, a two-simplex is a triangle, etc. In this paper, we focus on a subset
of simplicial complexes embedded in $\R ^2$ consisting of only vertices and
edges. Specifically, we study \emph{plane graphs} with straight-line embeddings
(referred to simply as \emph{plane graphs} throughout this paper). Furthermore,
we assume that the embedded vertices are in general position, meaning that no
three vertices are collinear and no two vertices share an $x$- or
$y$-coordinate.

\paragraph{Height Filtration}
Let $\simComp$ be a plane graph and denote $\sph ^1$ as the unit sphere in
$\R^2$. Consider $\dir \in \sph^1$; we define the \emph{lower star filtration}
with respect to direction $\dir$ in two steps. First, we let
$\hFiltFun{\dir}: \simComp\rightarrow \R$ be defined for a simplex
$\sigma \subseteq \simComp$ by $\hFiltFun{\dir}(\sigma) = \max_{v \in \sigma}
\dprod{v}{\dir},$ where~$\dprod{x}{y}$ is the inner (dot) product and measures
height in the direction of $y$, if $y$ is a unit vector. Intuitively, the height
of $\sigma$ from $\dir$  is the maximum height of all vertices in~$\sigma$.
Then, for each $t \in \R$, the
subcomplex~$\simComp_t:=\hFiltFunT{\dir}{[-\infty,t)}$ is composed of all
simplices that lie entirely below or at the height~$t$, with respect to the
direction $\dir$. Notice~$\simComp_r \subseteq \simComp_t$ for all~$r \leq t$
and~$\simComp_r=\simComp_t$ if no vertex has height in the interval $[r,t]$.
The sequence of all such subcomplexes, indexed by~$\R$, is the
\emph{height filtration} with respect to~$\dir$, notated as $\hFiltComplex{\dir}{\simComp}$.
Often, we simplify notation and define~$\hFilt{\dir} := \hFiltComplex{\dir}{\simComp}$.

\paragraph{Persistence Diagrams}

The persistence diagram is a summary of the homology groups $H_*(\simComp_t)$
as the height parameter $t$ ranges from $-\infty$ to $\infty$; in particular,
the persistence diagram is a set of birth-death pairs~$(b_i,d_i)$.  Each pair
represents an interval~$[b_i,d_i)$ corresponding to a homology generator. For
example, a birth event may occur when the height filtration discovers a new
vertex, representing a new component, and the corresponding death represents
the vertex joining another connected component. By definition
\cite{edelsbrunner2010computational}, all points in the diagonal~$y=x$ are also
included with infinite multiplicity. However, in this paper, we consider only
those points on the diagonal that are explicitly computed in the persistence
algorithm found in \cite{edelsbrunner2010computational}, which correspond to features with the same birth and death time.
For a direction $\dir \in \sph^1$, let the \emph{directional
persistence diagram}~$\dgm{i}{\hFiltComplex{\dir}{\simComp}}$ be the set of
birth-death pairs for the $i$-th homology group from the height
filtration~$\hFiltComplex{\dir}{\simComp}$.  As with the height filtration, we
simplify notation and define~$\dgm{i}{\dir} :=
\dgm{i}{\hFiltComplex{\dir}{\simComp}}$ when the complex is clear from context.
We conclude this section with a remark relating birth-death pairs in persistence
diagrams to the  simplices in $\simComp$; a full discussion of this remark is
found in~\cite[pp.~$120$--$121$ of \S $V.4$]{edelsbrunner2010computational}.

\begin{remark}[Adding a Simplex]
Let $\simComp$ be a simplicial complex and $\sigma$ a $k$-simplex whose
faces are all in~$\simComp$. Let $\beta_i$ refer to the $i$-th Betti number,
i.e., the rank of the $i$-th homology group. Then, the addition of $\sigma$ to
$\simComp$ will either increase $\beta_{k}$ by one or decrease $\beta_{k-1}$
by one.
\label{rem:addSimp}
\end{remark}

Thus, we can form a bijection between simplices of $\simComp$ and birth-death
events in a persistence diagram. This observation is the crux of the proofs of
 \thmref{intComp}  in \secref{vRec} and  \lemref{Indegree} in \secref{eRec}.

\section{Vertex Reconstruction}
\label{sec:vRec}

In this section, we present an algorithm for recovering the locations of
vertices of a simplicial complex~$\simComp$ using three directional persistence
diagrams.  Intuitively, for each direction, we identify the lines on which the
vertices of $\simComp$  must lie. We show that by choosing the three directions
such that they satisfy a simple property,  we can identify all vertex locations
by searching for points in the plane where three lines intersect. We call these
lines \emph{filtration~lines}:

\begin{definition}[Filtration Lines]
Given a direction vector $\dir \in \sph^1$, and a height $h \in \R$ the
\emph{filtration line at height $h$} is the line, denoted $\pLine{\dir}{h}$,
through point~$h*\dir$ and perpendicular to direction $\dir$, where $*$ denotes
scalar multiplication. Given a finite set of vertices $V \subset \R^2$, the
\emph{filtration lines of~$V$} are the set of lines
$$
    \pLines{\dir}{V} = \{ \pLine{\dir}{\hFiltFun{\dir}(v)} \}_{v \in V}.
$$
\end{definition}
Notice that all lines in $\pLines{\dir}{V}$ are parallel. Intuitively, if $v$ is a
vertex in a simplicial complex~$\simComp$, then the line
$\pLine{\dir}{\hFiltFun{\dir}(v)}$ occurs at the height where the filtration
$\hFilt{\dir}(K)$ includes $v$ for the first time. If the height is known but the
complex is not, the line~$\pLine{\dir}{\hFiltFun{\dir}(v)}$ defines all potential
locations for $v$.  By \remref{addSimp}, the births in the zero-dimensional
persistence diagram are in one-to-one correspondence with the vertices of
the simplex complex $K$.  Thus, we can construct~$\pLines{\dir}{V}$ from
a single directional diagram in $O(n)$~time. Given filtration lines for three
carefully chosen directions, we next show a correspondence between
intersections of three filtration lines and vertices in $\simComp$.

In what follows, given a direction $s_i \in \sph^1$ and a point $p \in \R^2$, define
$\simplePLine{i}{p} := \pLine{\dir_i}{\hFiltFun{\dir_i}(p)}$ as a way to simplify notation.

\begin{lemma}[Vertex Existence Lemma]
    Let~$\simComp$ be a simplicial complex with vertex set $V$ of size~$n$.
    Let $\dir_1,\dir_2 \in \sph^1$ be linearly independent and further suppose
    that $\pLines{\dir_1}{V}$ and $\pLines{\dir_2}{V}$ each
    contain~$n$ lines. Let $A$ be the collection of vertices at the intersections of
    lines in $\pLines{\dir_1}{V} \cup \pLines{\dir_2}{V}$. Let $\dir_3 \in \sph^1$ such
    that for all $u, v \in A$, $\simplePLine{3}{u} = \simplePLine{3}{v} \iff u = v$.
    Then, the following two statements hold true:

        \emph{(1)} $v \in V \iff \simplePLine{3}{v} \in \pLines{\dir_3}{V}$
            and
            $A \cap \simplePLine{3}{v} = \{v\}$

        \emph{(2)} For all $\ell \in \pLines{\dir_3}{V}$, $A \cap \ell \neq
        \emptyset$.
            \label{part:allIntersect}
\label{lem:vertExist}
\end{lemma}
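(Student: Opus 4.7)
The plan is to reduce the lemma to a short sequence of definition unpackings keyed on the structural observation that every vertex of $\simComp$ lies in $A$. The first step is to establish the inclusion $V \subseteq A$: for any $v \in V$, the two filtration lines $\simplePLine{1}{v}$ and $\simplePLine{2}{v}$ both pass through $v$ by definition of $\hFiltFun{\dir_1}$ and $\hFiltFun{\dir_2}$ on $0$-simplices, and since $\dir_1$ and $\dir_2$ are linearly independent these two lines are transverse and meet only at~$v$. Thus $v$ is the intersection of a line in $\pLines{\dir_1}{V}$ and a line in $\pLines{\dir_2}{V}$, so $v \in A$. I will also note that the hypothesis $|\pLines{\dir_i}{V}| = n$ for $i=1,2$ is exactly the statement that $\hFiltFun{\dir_i}$ is injective on $V$, which I use when recovering a unique $v \in V$ from a filtration line in $\pLines{\dir_3}{V}$.

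With $V \subseteq A$ in hand, part~(2) is immediate: any $\ell \in \pLines{\dir_3}{V}$ equals $\simplePLine{3}{v}$ for some $v \in V$, and $v \in A$ by the first step, so $v \in A \cap \ell$ and the intersection is nonempty. For the forward direction of~(1), if $v \in V$ then $\simplePLine{3}{v} \in \pLines{\dir_3}{V}$ by definition. To show that $A \cap \simplePLine{3}{v} = \{v\}$, I take an arbitrary $u \in A \cap \simplePLine{3}{v}$: then $u \in A$ and $\simplePLine{3}{u} = \simplePLine{3}{v}$, and since $v \in A$ as well, the separation hypothesis applied to $u,v \in A$ forces $u = v$.

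For the backward direction of~(1), suppose $\simplePLine{3}{v} \in \pLines{\dir_3}{V}$ and $A \cap \simplePLine{3}{v} = \{v\}$. Then some $v' \in V$ satisfies $\simplePLine{3}{v'} = \simplePLine{3}{v}$. By the first step $v' \in A$, and trivially $v' \in \simplePLine{3}{v'} = \simplePLine{3}{v}$, so $v' \in A \cap \simplePLine{3}{v} = \{v\}$, giving $v = v' \in V$. There is no genuine technical obstacle here; the whole argument is definition chasing. The only subtlety I would flag is that the separation property of $\dir_3$ is hypothesized only for points of $A$, not for arbitrary points of $\R^2$, so invoking it on a vertex $v \in V$ requires $v \in A$; the observation $V \subseteq A$ is precisely what makes that step uniform across both directions of~(1) as well as~(2).
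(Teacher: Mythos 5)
Your proof is correct and follows essentially the same route as the paper's: both rest on the observation that every vertex of $V$ lies in $A$ (via the transverse intersection of its $\dir_1$- and $\dir_2$-filtration lines) together with the separation hypothesis on $\dir_3$. If anything, you are slightly more careful than the paper in the forward direction of (1), where the paper only notes that the triple intersection of the three lines is $\{v\}$ and leaves implicit the appeal to the separation hypothesis needed to exclude other points of $A$ from $\simplePLine{3}{v}$; your version makes that step explicit.
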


\begin{figure}
\begin{center}
\includegraphics{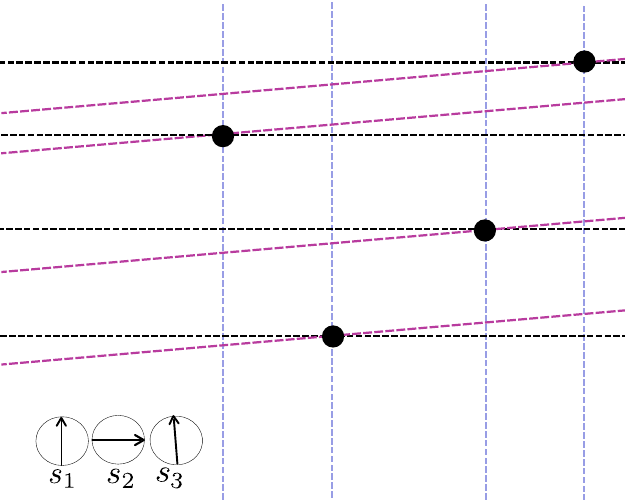}
    \caption{A vertex set, $V$, of size~$4$ with filtration lines that satisfy
the Vertex Existence Lemma. Here, $\dir_1, \dir_2 \in \sph^1$ are linearly
independent and the filtration lines are colored so that $\pLines{\dir_1}{V}$ are
the black horizontal lines, $\pLines{\dir_2}{V}$ are the blue vertical lines, and $\pLines{\dir_3}{V}$
are the magenta diagonal lines. An intersection of three colored lines
corresponds to the location of a vertex in $V$.}
\label{fig:vert}
\end{center}
\end{figure}

\begin{proof}

    First, we prove Part (1).

    ($\Rightarrow$) Let $v \in V$. Then,
    $\simplePLine{i}{v} \in \pLines{\dir_i}{V}$ and $v \in \simplePLine{i}{v}$ for
    $i = \{1, 2, 3\}$. Hence,
    $\simplePLine{1}{v} \cap \simplePLine{2}{v} \cap \simplePLine{3}{v}= \{v\}$,
    as desired.

    ($\Leftarrow$) Assume, for the sake of contradiction, that
    $\simplePLine{3}{v} \in \pLines{\dir_3}{V}$ and
    $A \cap \simplePLine{3}{v} = \{v\}$,
    yet $v \not\in V$.
    Since $\simplePLine{3}{v} \in \pLines{\dir_3}{V}$ and
    $v \not\in V$,
    some other vertex $u \in V$ must have height $\hFiltFun{\dir_3}(v)$.
    Since~$u \in V$, we know
    $\simplePLine{i}{u} \in \pLines{\dir_i}{V}$ for $i \in \{1, 2, 3\}$.
    And, by~($\Rightarrow$) applied to $u$, we know $u \in A$.
    Since $\simplePLine{3}{u} = \simplePLine{3}{v}$,
    both $u$ and~$v$ are in $A$ and on the line $\simplePLine{3}{v}$, but $u
    \neq v$, which is a~contradiction.

    Next, we prove Part (2) of the lemma. Assume, for contradiction, that
    there exists $\ell \in \pLines{\dir_3}{V}$ such that $A \cap \ell = \emptyset$.
    As $\ell \in \pLines{\dir_3}{V}$, a vertex
    $v \in V$ exists such that $\ell = \simplePLine{3}{v}$ and $v$ lies on $\ell$.
    However, $v \in \simplePLine{1}{v} \cap \simplePLine{2}{v} \subset A$,
    which is a contradiction.
\end{proof}

In the previous lemma, we needed to find a third direction with specific properties.
If we use horizontal and vertical lines for our first two directions, then we
can use the geometry of the boxes formed from these lines to pick the third
direction. More specifically, we look at the box with the largest width and
smallest height and pick the third direction so that if one of the corresponding
lines intersects the bottom left corner of the box then it will also intersect
the box somewhere along the right edge. In~\figref{vert}, the third direction
was computed using this procedure with the second box from the left in the top
row. Next, we give a more precise description of the vertex localization procedure.

\begin{lemma}[Vertex Localization]
    Let $L_H$ and $L_V$ be~$n$ horizontal and $n$ vertical lines, respectively.
    Let~$w$ (and $h$) be the largest (and smallest) distance between two lines of
    $L_V$ (and $L_H$, respectively). Let $B$ be the smallest axis-aligned
    bounding box containing the intersections of lines in $L_H \cup L_V$.
    For $0 < \varepsilon < h$, let $\dir = (w, h - \varepsilon )$.
    Any line parallel to $\dir$ can intersect at most one line of $L_H$ in $B$.

    \label{lem:buffer}
\end{lemma}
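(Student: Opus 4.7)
The plan is to prove this by a simple slope comparison, trapping the vertical rise of any such line inside $B$.

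First, I would establish that the horizontal width of $B$ is exactly $w$. Since the vertical lines in $L_V$ are all parallel, the largest distance between two of them is realized by the leftmost and rightmost lines; these same two lines determine the left and right edges of the smallest axis-aligned bounding box $B$ containing the intersections of $L_H \cup L_V$. Hence $B$ has width $w$.

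Next, I would analyze the slope of $\dir = (w, h-\varepsilon)$. Any line parallel to $\dir$ has slope $(h-\varepsilon)/w$. Restricted to $B$, such a line has horizontal extent at most $w$ (the width of $B$), so its vertical extent inside $B$ is at most
\[
\frac{h-\varepsilon}{w}\cdot w \;=\; h-\varepsilon \;<\; h.
\]

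Finally, I would invoke the definition of $h$ as the smallest distance between any two lines in $L_H$: any two distinct horizontal lines in $L_H$ are separated vertically by at least $h$. A line whose $y$-coordinate varies by strictly less than $h$ as it traverses $B$ therefore cannot cross two distinct lines of $L_H$ at points inside $B$, proving that it meets at most one such line in $B$. The only mild obstacle is the first step (justifying that the width of $B$ equals $w$); everything else is a one-line slope calculation.
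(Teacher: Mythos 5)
Your proof is correct and follows essentially the same argument as the paper's: bound the horizontal run inside $B$ by $w$, use the slope $(h-\varepsilon)/w$ to conclude the vertical rise is less than $h$, and compare against the minimum gap $h$ between horizontal lines. The only difference is cosmetic — you argue directly where the paper argues by contradiction — and you additionally spell out why the width of $B$ equals $w$, a step the paper leaves implicit in its claim $w' \leq w$.
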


\begin{proof}
    Note that, by definition, $\dir$ is a vector in the direction that is at a slightly
    smaller angle than the diagonal of the box of size $w$ by $h$.
    Assume, by contradiction, that a line parallel to $\dir$ may intersect two
    lines of $L_H$ within~$B$. Specifically, let $\ell_1, \ell_2 \in L_H$ and
    let $\ell_{\dir}$ be a line parallel to $\dir$ such that the points
    $\ell_i \cap \ell_{\dir} = (x_i, y_i)$ for $i = \{1, 2\}$ are the two such intersection
    points within~$B$. Notice since the lines of $L_H$ are horizontal
    and by the definition of $h$, we observe that $|y_1-y_2|\geq h$.
    Let $w' = |x_1 - x_2|$, and observe $w' \leq w$.  Since the slope of $\ell_s$ is
    $(h - \varepsilon)  / w$, we have $|y_1 - y_2| < h$, which is a contradiction.
\end{proof}

We conclude this section with an algorithm to determine the coordinates of the vertices
of the original graph in~$\R^2$, using only three height filtrations.

\begin{theorem}[Vertex Reconstruction]
    Let~$\simComp$ be a plane graph. We can can compute the coordinates of all
    $n$ vertices of $\simComp$ in $O(n \log{n})$ time from three directional
    persistence diagrams.
\label{thm:intComp}
\end{theorem}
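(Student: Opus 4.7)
The plan is to choose $\dir_1$ and $\dir_2$ so that $\pLines{\dir_1}{V}$ and $\pLines{\dir_2}{V}$ are the horizontal and vertical filtration lines through the vertices of $\simComp$, then apply \lemref{buffer} to construct a third direction $\dir_3$ that satisfies the hypothesis of \lemref{vertExist}, and finally locate every triple intersection of filtration lines in $O(n\log n)$ time using binary search.

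First, I would set $\dir_1=(0,1)$ and $\dir_2=(1,0)$, so that $\pLines{\dir_1}{V}$ consists of $n$ horizontal lines and $\pLines{\dir_2}{V}$ of $n$ vertical lines. By \remref{addSimp}, the births in $\dgm{0}{\dir_1}$ and $\dgm{0}{\dir_2}$ are in bijection with the vertices of $\simComp$, so reading off the two diagrams yields the $y$- and $x$-coordinates of the vertices in $O(n)$ time. General position guarantees that both coordinate lists have $n$ distinct values, so $|\pLines{\dir_i}{V}|=n$ for $i\in\{1,2\}$.

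Next, I would sort the two coordinate lists in $O(n\log n)$ time, extract the quantities $w$ and $h$ required by \lemref{buffer}, and set $\dir_3$ proportional to $(w,h-\varepsilon)$ for any $\varepsilon\in(0,h)$. Let $A$ denote the set of intersections of lines in $\pLines{\dir_1}{V}\cup\pLines{\dir_2}{V}$ and let $B$ be its axis-aligned bounding box. Since \lemref{buffer} forbids any line parallel to $\dir_3$ from meeting two horizontal filtration lines inside $B$, and since distinct vertical filtration lines have distinct $x$-coordinates, no two distinct points of $A$ can share a $\dir_3$-filtration line; hence $\dir_3$ satisfies the hypothesis of \lemref{vertExist}. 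Reading $\dgm{0}{\dir_3}$ recovers $\pLines{\dir_3}{V}$ in $O(n)$ time, and \lemref{vertExist} identifies the vertices of $\simComp$ as exactly the points of $A$ that lie on some line of $\pLines{\dir_3}{V}$.

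Finally, to locate these triple intersections without enumerating the $n^2$ grid points of $A$, I would iterate over the $n$ lines $\ell\in\pLines{\dir_3}{V}$ as follows: compute the $y$-coordinates at which $\ell$ enters and exits $B$, use binary search on the sorted $y$-list to find the unique horizontal filtration line whose height lies in that interval (unique by \lemref{buffer}), solve a linear equation for the $x$-coordinate of the intersection, and binary-search the sorted $x$-list for the matching vertical filtration line, which is guaranteed to exist by Part~(2) of \lemref{vertExist}. Each iteration costs $O(\log n)$, so the total running time is $O(n\log n)$. The main obstacle I anticipate is precisely this last step: one must carefully translate the geometric ``at most one horizontal line per $\dir_3$-line inside $B$'' guarantee of \lemref{buffer} into a pair of $O(\log n)$ binary searches whose correctness is underwritten by \lemref{vertExist}, so that the algorithm never has to examine the full $n\times n$ grid.
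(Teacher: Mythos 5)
Your proposal takes essentially the same route as the paper: axis-aligned directions for the first two families of filtration lines, \lemref{buffer} to manufacture a third direction, \lemref{vertExist} to certify that the triple intersections are exactly the vertices, and sorting plus $O(\log n)$ work per line to avoid touching the $n\times n$ grid. The only structural difference is the last step: you binary-search, for each $\ell\in\pLines{\dir_3}{V}$, for the unique horizontal line it meets inside $B$ and then for the matching vertical line, whereas the paper sorts $\pLines{\dir_3}{V}$ by intercept with the leftmost line of the first family and pairs the $i$-th line of $\pLines{\dir_3}{V}$ with the $i$-th line of the other family in a single linear pass. Both are $O(n\log n)$ overall and both rest on Part~(2) of \lemref{vertExist}, so this is an implementation detail rather than a different argument.

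There is one concrete slip you should fix: you set $\dir_3$ \emph{proportional to} $(w,h-\varepsilon)$, but by definition the filtration lines $\pLines{\dir_3}{V}$ are \emph{perpendicular} to $\dir_3$, so with your choice they would have slope $-w/(h-\varepsilon)$ and could cross many horizontal lines inside $B$; the guarantee of \lemref{buffer} applies to lines \emph{parallel} to $(w,h-\varepsilon)$, which is what your argument actually uses. The paper avoids this by setting $\dir_*=(w,h/2)$ and then choosing $\dir_3$ \emph{perpendicular} to $\dir_*$, so that the lines of $\pLines{\dir_3}{V}$ are parallel to $\dir_*$ and each meets at most one horizontal line in $B$. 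With that one-line correction, your verification that no two distinct points of $A$ share a $\dir_3$-filtration line (two such points would have to lie on distinct horizontal lines, contradicting \lemref{buffer}) is sound, and the rest of your proof goes through as written.
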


\begin{proof}
Let $\dir_1 = (1, 0)$ and $\dir_2 = (0, 1)$, which are linearly independent.
    We compute the filtration lines~$\pLines{\dir_i}{V}$ for $i=1,2$ in $O(n)$
    time by \remref{addSimp}. By our general position assumption, no two
    vertices of~$\simComp$ share an $x$- or $y$-coordinate. Thus, the
    sets~$\pLines{\dir_1}{V}$ and~$\pLines{\dir_2}{V}$ each contain~$n$
    distinct lines. Let $A$ be the set of intersection points of the lines in
    $\pLines{\dir_1}{V}$ and~$\pLines{\dir_2}{V}$. The next step is to identify
    a direction $\dir_3$ such that each line in~$\pLines{\dir_3}{V}$ intersects
    with only one point $A$, so that we can use \lemref{vertExist}.

    Let $w$ (and $h$) be the greatest (and least) distance between two adjacent
    lines in $\pLines{\dir_1}{V}$ (and $\pLines{\dir_2}{V}$, respectively). Let $B$
    be the smallest axis-aligned bounding box containing $A$, and let
    $\dir_{*} = (w, \frac{h}{2})$. By \lemref{buffer}, any line parallel to
    $\dir_{*}$ will intersect at most one line of~$\pLines{\dir_2}{V}$ in $B$.
    Thus, we choose $\dir_3 \in \sph^1$ that is perpendicular to  $\dir_{*}$.
    By the second part of~\lemref{vertExist}, we now have that each line in
    $\pLines{\dir_3}{V}$ intersects $A$. Thus, there are $n$ intersections
    between $\pLines{\dir_2}{V}$ and $\pLines{\dir_3}{V}$ in $B$, each of
    which also intersects with  $\pLines{\dir_1}{V}$.

    The previous paragraph leads us to a simple algorithm for finding the third
    direction and identifying all the triple intersections.  In the analysis below,
    steps that do not mention a number of diagrams use no diagrams.
    First, we construct $\pLines{s_1}{V}$ and $\pLines{s_2}{V}$ in $O(n)$ time
    using two directional persistence diagrams.
    Second, we sort the lines of
    $\pLines{s_1}{V}$ and $\pLines{s_2}{V}$ by their $x$- and $y$-intercepts
    respectively in $O(n \log n)$ time. Third, we find $\dir_3$ by computing
    $w$ and $h$ from our sorted lines in $O(n)$ time.
    Fourth, we construct
    $\pLines{s_3}{V}$ in $O(n)$ time using one directional persistence diagram.
    Fifth, we sort the lines in $\pLines{s_3}{V}$ by their intersection with the
    leftmost line of $\pLines{s_1}{V}$ in $O(n \log n)$ time. Finally, we compute
    coordinates of the $n$ vertices by intersecting the $i$-th line of
    $\pLines{s_2}{V}$ with the $i$-th line of $\pLines{s_3}{V}$
    in~$O(n)$ time.  (Observe, this last step works since the vertices correspond
    to the $n$ intersections in $B$, as described above).

    Therefore, we use three directional diagrams (two in the first step and one
    in the fourth step) and $O(n \log n)$ time (sorting of lines in the second and
    fifth steps) to reconstruct the vertices.
\end{proof}

\section{Edge Reconstruction}
\label{sec:eRec}
Given the vertices constructed in \secref{vRec}, we describe how to reconstruct the
edges in a plane graph using $n(n-1)$ persistence diagrams. The key to determining
whether an edge exists or not is counting the degree of a vertex, for edges ``below''
the vertex with respect to a given direction. We begin this section by defining
necessary terms, and then explicitly describing our algorithm for constructing edges.

\begin{definition}[Indegree of Vertex]
Let $\simComp$ be a plane graph with vertex set $V$. Then, for every vertex
$v \in V$ and every direction $\dir \in \sph^1$, we define:
$$\indeg{v}{\dir} = |\{(v,v') \in E \mid \dir \cdot v' \leq \dir \cdot v \}|.$$
\end{definition}
In other words, the indegree of $v$ is the number of edges incident to $v$ that lie
below $v$, with respect to direction~$\dir$; see~\figref{indegree}.
\begin{figure}[htb]
\begin{center}
\includegraphics{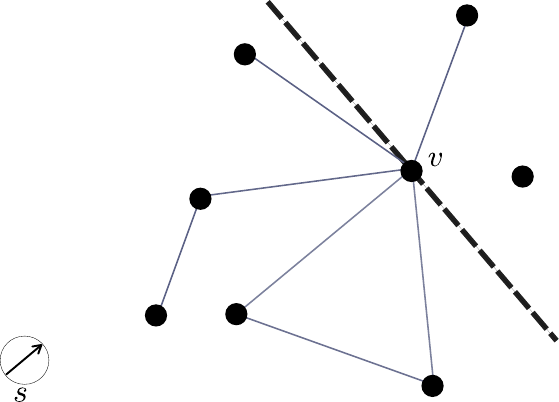}
\caption{A plane graph with a dashed line drawn intersecting $v$ in the direction
perpendicular to~$\dir$. Since four edges incident to $v$  lie below $v$, with
respect to direction~$\dir$, $\indeg{v}{\dir}=4$.}
\label{fig:indegree}
\end{center}
\end{figure}

Next, we prove that given a direction, we can determine the indegree of a
vertex:

\begin{lemma}[Indegree from Diagram]\label{lem:Indegree}
Let $\simComp$ be a plane graph with vertex set $V$. Let  $\dir \in \sph ^1$
be such that no two vertices are at the same height with respect to $\dir$, i.e.,
$|\pLines{\dir}{V}| = n$. Let $\dgm{0}{\dir}$ and~$\dgm{1}{\dir}$ be the zero-
and one-dimensional persistence diagrams resulting from the height filtration
$\hFilt{\dir}$ on $\simComp$. Then, for all $v \in V$,
\begin{equation*}
\begin{split}
\indeg{v}{\dir} = & |\{(x,y) \in \dgm{0}{\dir} \mid y = v \cdot \dir \}| +\\
& |\{(x,y) \in \dgm{1}{\dir} \mid x = v \cdot \dir \}|.
\end{split}
\end{equation*}
\end{lemma}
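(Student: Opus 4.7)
The plan is to track exactly which simplices are inserted into the filtration at the single height $h := v \cdot \dir$, and then apply Remark~\ref{rem:addSimp} to pair each insertion with a corresponding point in one of the persistence diagrams. By the general position hypothesis $|\pLines{\dir}{V}| = n$, no vertex other than $v$ has height $h$. Consequently, the only vertex added at height $h$ is $v$ itself, and the only edges added at height $h$ are those edges $(v,v')$ whose maximum-height endpoint is $v$, i.e., those with $v' \cdot \dir \leq v \cdot \dir$. The number of such edges is exactly $\indeg{v}{\dir}$.

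Next I would apply Remark~\ref{rem:addSimp} to each simplex added at height $h$. The vertex $v$ is a $0$-simplex whose addition increases $\beta_0$ by one, producing a birth in $\dgm{0}{\dir}$ with $x$-coordinate $h$; crucially, this gives a point with $x=h$ and not $y=h$, so $v$ does not contribute to either sum on the right-hand side. Each of the $\indeg{v}{\dir}$ edges is a $1$-simplex whose faces are already present, so its addition either decreases $\beta_0$ by one (producing a death in $\dgm{0}{\dir}$ with $y$-coordinate $h$) or increases $\beta_1$ by one (producing a birth in $\dgm{1}{\dir}$ with $x$-coordinate $h$). Summing over these edges yields at least the claimed equality in one direction.

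To finish, I would argue the converse containment: every point contributing to the right-hand side must come from an edge incident to $v$. A point $(x,y) \in \dgm{0}{\dir}$ with $y=h$ represents a death at height $h$, and a point $(x,y) \in \dgm{1}{\dir}$ with $x=h$ represents a birth at height $h$; in either case, by Remark~\ref{rem:addSimp} this point is paired with a simplex added at height $h$ that changed the corresponding Betti number in the appropriate way. Since the vertex $v$ is the only $0$-simplex added at height $h$ and it produces a birth in $\dgm{0}{\dir}$ rather than a death or a $1$-dimensional birth, the paired simplex must be an edge, and by the identification above it must be one of the $\indeg{v}{\dir}$ edges incident to $v$ lying below $v$. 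This establishes a bijection between the edges counted by $\indeg{v}{\dir}$ and the diagram points counted on the right-hand side.

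The main technical obstacle is justifying the pairing between simplices and diagram points cleanly when several simplices share the height $h$: one must be sure that the birth from the vertex $v$ is not accidentally counted as a death, and that the ordering chosen within the lower star of $v$ (vertex before its incident edges) does not affect the multiset of diagram points produced. Both concerns are resolved by invoking the standard fact (implicit in Remark~\ref{rem:addSimp} and the persistence algorithm of~\cite{edelsbrunner2010computational}) that the multiset of birth–death pairs depends only on the filtration, not on the linear extension used within a single lower star, so the bookkeeping at height $h$ is well defined.
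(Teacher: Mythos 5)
Your proposal is correct and follows essentially the same approach as the paper: both arguments use Remark~\ref{rem:addSimp} to set up a bijection between the edges incident to $v$ from below (which enter the filtration exactly at height $v\cdot\dir$) and the deaths in $\dgm{0}{\dir}$ at $y=v\cdot\dir$ together with the births in $\dgm{1}{\dir}$ at $x=v\cdot\dir$. Your explicit handling of why the vertex $v$ itself does not contribute (its birth has $x=h$, not $y=h$) and of the lower-star ordering is a bit more careful than the paper's, but it is the same argument.
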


\begin{proof}
Let $v, v' \in V$ such that $\dir \cdot v' < \dir \cdot v$, i.e., the vertex~$v'$
is lower in direction $\dir$ than $v$. Then, by~\remref{addSimp}, if
$(v, v') \in E$, it must be one of the following in the filter of~$\simComp$
defined by $\dir$:
    (1) an edge that joins two disconnected components; or
    (2) an edge that creates a one-cycle.
Since edges are added to a filtration at the height of the higher vertex,
we see~(1) as a death in $\dgm{0}{\dir}$ and (2) as a birth in~$\dgm{1}{\dir}$,
both at height~$\dir \cdot v$. In addition, each finite death in $\dgm{0}{\dir}$
and every birth in $\dgm{1}{\dir}$ at time $\dir \cdot v$ must correspond to an
edge, i.e., edges are the only simplices that can cause these events. Then,
the set of edges of types (1) and (2) is
$\{(x,y) \in \dgm{0}{\dir} \mid y = v \cdot \dir \}$ and
$\{(x,y) \in \dgm{1}{\dir} \mid x = v \cdot \dir \}$, respectively. The size of
the union of these two multi-sets is equal to the number of edges starting at~$v'$
lower than~$v$ in direction~$s$ and ending at~$v$, as required.
\end{proof}

In order to decide whether an edge $(v,v')$ exists between two vertices, we look at the
degree of $v$ as seen by two close directions such that $v'$ is the only vertex in what
we call a \emph{bow tie at $v$}:
\begin{definition}[Bow Tie]
    Let ~$v \in V$, and choose $\dir_1,\dir_2 \in \sph^1$.  Then, a bow tie at $v$ is the
    symmetric difference between the half planes below $v$ in directions
    $\dir_1$ and~$\dir_2$. The \emph{width} of the bow tie is half of the angle
    between $\dir_1$ and $\dir_2$.
\end{definition}

Because no three vertices in our plane graph are collinear, for each
pair of vertices $v,v'\in V$, we can always find a bow tie centered at $v$ that
contains the vertex~$v'$ and no other vertex in $V$; see
\figref{bowtie}.
\begin{figure}[htb]
\begin{center}
\includegraphics{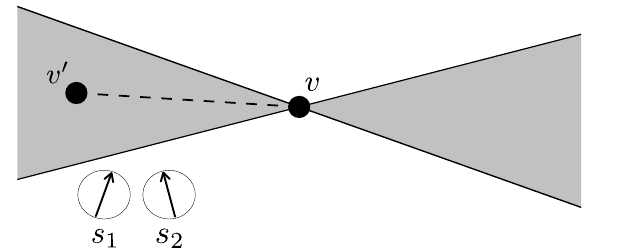}
\caption{A bow tie~$B$ at~$v$, denoted by the gray shaded area.~$B$ contains
exactly one vertex,~$v'$, so the only potential edge in~$B$ is $(v,v')$.}
\label{fig:bowtie}
\end{center}
\end{figure}
We use bow tie regions that only contain one vertex, $v'$ other than the center,
$v$ to determine if there exists an edge between $v$ and $v'$; see \figref{edgeExist}.
We then use \lemref{edgeExist} to decide if the edge $(v,v')$ exists in our plane graph.
\begin{figure}[htb]
\begin{center}
\includegraphics{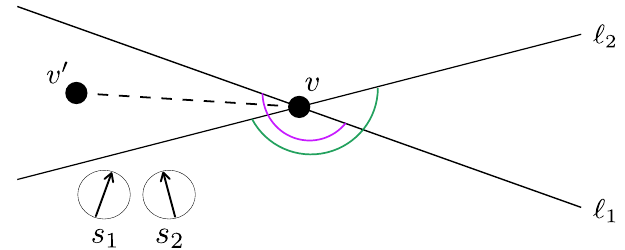}
\caption{A bow tie at $v$ that contains the vertex~$v'$ and no other vertices.
In order to determine if there exists an edge between~$v$ and~$v'$, we
compute~$\indeg{v}{\dir_1}$ and~$\indeg{v}{\dir_2}$, i.e., the number of edges
incident to~$v$ in the purple and green arcs, respectively. An edge
exists between~$v$ and~$v'$ if and only if~$|\indeg{v}{\dir_1} - \indeg{v}{\dir_2}|=1.$}
\label{fig:edgeExist}
\end{center}
\end{figure}

\begin{lemma}[Edge Existence] \label{lem:edgeExist}
Let $\simComp$ be a plane graph with vertex set $V$ and edge set $E$.
Let~\mbox{$v,v' \in V$}. Let $\dir_1, \dir_2 \in  \sph ^1$ such that the bow
tie~$B$ at $v$ defined by $\dir_1$ and $\dir_2$ satisfies:
$B \cap V = v'$.~Then,
$$|\indeg{v}{\dir_1} - \indeg{v}{\dir_2}|=1 \iff (v,v') \in E.$$
\end{lemma}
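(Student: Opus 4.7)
The plan is to translate the algebraic quantity $|\indeg{v}{\dir_1} - \indeg{v}{\dir_2}|$ into a count of edges from $v$ whose other endpoint lies in the bow tie $B$, and then exploit the hypothesis $B \cap V = \{v'\}$ to conclude that this count is either $0$ or $1$ according to whether $(v,v') \in E$.

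First, I would rewrite the indegrees in terms of half-planes. Let $H_i = \{x \in \R^2 : \dir_i \cdot x \leq \dir_i \cdot v\}$ for $i=1,2$, so that $\indeg{v}{\dir_i}$ counts the edges $(v,u) \in E$ with $u \in H_i$ (and implicitly $u \neq v$). By definition of the bow tie, $B = H_1 \triangle H_2 = (H_1 \setminus H_2) \cup (H_2 \setminus H_1)$. Letting $m = |\{(v,u)\in E : u \in H_1 \cap H_2\}|$ and $n_i = |\{(v,u)\in E : u \in H_i \setminus H_{3-i}\}|$, we can decompose
\[
\indeg{v}{\dir_i} = m + n_i,
\]
so that $\indeg{v}{\dir_1} - \indeg{v}{\dir_2} = n_1 - n_2$.

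Next, I would invoke the hypothesis $B \cap V = \{v'\}$. Since every vertex $u \neq v$ that is a neighbor of $v$ and lies in $H_i \setminus H_{3-i}$ is automatically in $B \cap V$, the only candidate neighbor contributing to $n_1 + n_2$ is $v'$ itself. Moreover, $H_1 \setminus H_2$ and $H_2 \setminus H_1$ are disjoint, so $v'$ belongs to exactly one of them. Hence $(v,v') \in E$ forces exactly one of $n_1, n_2$ to equal $1$ and the other $0$, giving $|n_1 - n_2| = 1$; and $(v,v') \notin E$ forces $n_1 = n_2 = 0$, giving $|n_1 - n_2| = 0$. Combining both directions yields the stated equivalence.

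The main obstacle is really only careful bookkeeping at the boundaries of $H_1$ and $H_2$: one must check that $v$ itself is not spuriously counted (it lies in $H_1 \cap H_2$ but is excluded because edges are of the form $(v,u)$ with $u \neq v$) and that no other vertex of $V$ sneaks onto the boundary lines through $v$. The latter is guaranteed by the standing general-position assumption (no three vertices collinear, in particular with $v$), together with $B \cap V = \{v'\}$, which together ensure that every neighbor of $v$ lies either strictly inside $H_1 \cap H_2$ or at $v'$ in the bow tie. Once these cases are accounted for, the proof reduces to the set-theoretic decomposition above.
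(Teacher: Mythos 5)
Your proof is correct and follows essentially the same route as the paper's: both decompose the edges incident to $v$ into those lying below both directions (your $m$, the paper's set $A$) plus those landing in each cone of the bow tie (your $n_1,n_2$, the paper's $B_1,B_2$), so the difference of indegrees collapses to the difference of cone counts, which the hypothesis $B\cap V=\{v'\}$ forces to be $1$ or $0$ according to whether $(v,v')\in E$. Your explicit attention to the boundary and general-position bookkeeping is, if anything, slightly more careful than the paper's.
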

\begin{proof}
    Since edges
    in~$\simComp$ are straight lines, any edge incident to~$v$ will either fall
    in the bow tie region~$B$ or will be on the same side (above or below) of
    both lines. Let $A$ be the set of edges incident on $v$ and below both
    lines; that is \mbox{$A = \{ (v,v_*) \in E~|~ \dprod{\dir_i}{v_*} <
    \dprod{\dir_i}{v} \}.$} Furthermore, suppose we split the bowtie into the
    two infinite cones. Let $B_1$ be the set of edges in one cone and~$B_2$
    be the set of edges in the other cone. We note that
    $\left| |B_1| - |B_2| \right|$ is equal to one if there is an edge
    $(v,v')\in E$ with $v' \in B_1$ or $v' \in B_2$ and zero otherwise.
    Then, by definition of indegree,
    \begin{align*}
        |\indeg{v}{\dir_1} - &\indeg{v}{\dir_2}| \\
        &= \left| |A|+|B_1|-|A|-|B_2| \right|\\
        &= \left| |B_1|-|B_2| \right|\\
        &= \left| V \cap B \right|,
    \end{align*}
    which holds if and only if $(v,v') \in E$.
    Then $|\indeg{v}{\dir_1} - \indeg{v}{\dir_2}| = 1 \iff (v,v') \in E$,
    as required.
\end{proof}

Next, we prove that we can find the embedding of the edges in the original graph
using $O(n^2)$ directional persistence diagrams.

\begin{theorem}[Edge Reconstruction]
\label{thm:edgeEmbed}
Let $\simComp$ be a plane graph, with vertex set $V$ and edge set $E$.
If~$V$ is known, then we can compute~$E$ using $O(n^2)$ directional
persistence diagrams.
\end{theorem}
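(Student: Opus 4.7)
The plan is to apply the Edge Existence Lemma (\lemref{edgeExist}) to every pair of distinct vertices $(v,v')$, to extract the required indegrees from persistence diagrams via the Indegree from Diagram Lemma (\lemref{Indegree}), and to budget directions carefully so that only $n(n-1)$ diagrams are needed overall.

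First, I would fix a center vertex $v \in V$ and show how to decide, for each of the other $n-1$ vertices $v'$, whether $(v,v') \in E$ using exactly $n-1$ directional persistence diagrams taken ``around'' $v$. By general position, no three vertices of $V$ are collinear, so the $n-1$ rays from $v$ to the other vertices have pairwise distinct angles modulo $\pi$. I sort these rays by angle in $[0,\pi)$ and, in each of the $n-1$ cyclic gaps between consecutive angles, pick a separator direction $\dir_i \in \sph^1$, perturbing slightly if necessary so that no two vertices of $V$ share a height with respect to $\dir_i$ (the hypothesis of \lemref{Indegree}). By construction, for each $i$, the bow tie at $v$ defined by the consecutive pair $(\dir_i, \dir_{i+1})$ contains exactly one of the other $n-1$ vertices.

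Next, I would compute $\dgm{0}{\dir_i}$ and $\dgm{1}{\dir_i}$ for $i = 0, \ldots, n-2$ and use \lemref{Indegree} to extract $\indeg{v}{\dir_i}$ from each pair of diagrams. For each consecutive pair of separators $(\dir_i, \dir_{i+1})$ whose bow tie at $v$ contains the unique other vertex $v'$, \lemref{edgeExist} gives that $(v,v') \in E$ iff $|\indeg{v}{\dir_i} - \indeg{v}{\dir_{i+1}}| = 1$. Ranging over all $n$ choices of center $v$, this procedure uses exactly $n(n-1) = O(n^2)$ directional persistence diagrams and recovers every edge of $\simComp$.

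The main obstacle is the bookkeeping around the $\pi$-symmetry of the bow-tie region: a bow tie consists of two opposite cones at $v$, so I must verify that sorting the rays in $[0,\pi)$ rather than $[0,2\pi)$ and inserting separators in the cyclic gaps really produces $n-1$ bow ties whose interiors each contain exactly one other vertex. This follows because each of the $n-1$ other vertices contributes a unique angle in $[0,\pi)$ (no two rays from $v$ lie on a common line, by general position), and consecutive separators in the cyclic ordering isolate exactly one such angle. A secondary concern is justifying that the separator directions can always be chosen to simultaneously satisfy the height-distinctness hypothesis of \lemref{Indegree}; this is a standard genericity argument, since only finitely many directions fail the condition.
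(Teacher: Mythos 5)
Your overall strategy is the same as the paper's: reduce edge detection to \lemref{edgeExist} by building, for each pair $(v,v')$, a bow tie at $v$ containing only $v'$, and read off the two indegrees via \lemref{Indegree}. Where you differ is in the allocation of directions, and that is where the argument breaks. A bow tie is the \emph{symmetric difference} of the two half-planes below $v$; it is an unordered function of the pair of directions, and its angular width equals the angle between $\dir_i$ and $\dir_{i+1}$ as vectors in $\sph^1$. If you place $n-1$ separators so that their angles (equivalently, their perpendicular lines through $v$) interleave the $n-1$ ray angles in $[0,\pi)$, then the cyclically consecutive bow ties would have to have widths summing to $\pi$ in order to isolate the $n-1$ vertices one apiece; but with $\beta_1<\cdots<\beta_{n-1}$ the non-wrap-around pairs contribute total width $\beta_{n-1}-\beta_1$ and the wrap-around pair $(\dir_{n-1},\dir_1)$ contributes $\beta_{n-1}-\beta_1$ again rather than $\pi-(\beta_{n-1}-\beta_1)$. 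Concretely, the wrap-around pair produces the double cone \emph{complementary} to the one you intend: it contains the other $n-2$ vertices, so \lemref{edgeExist} does not apply to it. The cleanest failure is $n-1=2$: the two ``consecutive pairs'' $(\dir_1,\dir_2)$ and $(\dir_2,\dir_1)$ define the \emph{same} bow tie, so two directions cannot test both candidate edges at $v$. This is precisely the $\pi$-symmetry issue you flagged, but your resolution only addresses distinctness of ray angles mod $\pi$, not which of the two opposite sector-pairs the symmetric difference actually is.

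The gap is repairable without harming the stated bound: either use a chain of $n$ directions $\dir_1,\dots,\dir_{n-1},-\dir_1$ per center (whose consecutive symmetric differences do tile the pencil of directions at $v$), giving $n^2$ diagrams in total, or do what the paper does, namely fix one global width $\theta$ smaller than every angular gap $\theta(v)$ (computable from the cyclic orderings of $V\setminus\{v\}$ around each $v$ in $O(n^2)$ time) and spend a fresh pair of directions at angles $\pm\theta$ from the normal to $v'-v$ on each pair $\{v,v'\}$, giving $n(n-1)$ diagrams. Either way $O(n^2)$ holds; your claimed count of exactly $n(n-1)$ does not survive the repair, but the theorem does not require it. Your secondary point, perturbing each separator within its open gap to meet the height-distinctness hypothesis of \lemref{Indegree}, is a correct genericity argument.
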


\begin{proof}
We prove this theorem constructively. Intuitively, we construct a bow tie for
    each potential edge and use \lemref{edgeExist} to determine if the edge
    exists or not. Our algorithm has three steps for each pair of vertices in
    $V$: Step~$1$ is to
    determine a global bow tie width, Step~$2$ is to construct suitable bow ties,
    and Step~$3$ is to compute indegrees. See \appendref{Example} for an example of
    walking through the reconstruction.

    \emph{Step 1: Determine bow tie width.}
    For each vertex~$v\in V$, we consider the cyclic ordering of the points in
    $V \setminus \{v\}$ around $v$.   We define $\theta(v)$ to be the minimum angle
    between all adjacent pairs of lines through $v$; see \figref{edgebow},
    where the angles between adjacent lines are denoted~$\theta_i$.
    Finally, we choose~$\theta$ less than $\min _{v \in V} \theta(v)$.
    By Lemmas~$1$ and~$2$ of
    \cite{verma2011slow}, we compute the cyclic orderings for all vertices in
    $V$ in~$O(n^2)$ time.
    Since  computing
    each~$\theta(v)$ is~$O(n)$ time once we have the cyclic ordering, the runtime for this step is~$O(n^2)$.

    \emph{Step 2: Constuct bow ties.}
    For each pair of vertices $(v,v') \in V \times V$ such that $v \neq v'$,
    let~$\dir$ be a unit vector perpendicular to vector $(v'-v)$,
    and~let $\dir_1,\dir_2$ be the two unit vectors that form
    angles~$\pm {\theta}$ with~$\dir$. Let $B$ be the bow tie between
     $\pLine{\dir_1 }{\hFiltFun{\dir_1}(v)}$ and~$\pLine{\dir_2}{\hFiltFun{\dir_2}(v)}$.
    Note that by the construction, $B$ contains exactly one point from~$V$, namely~$v'$.

    \begin{figure}
        \center
        \includegraphics{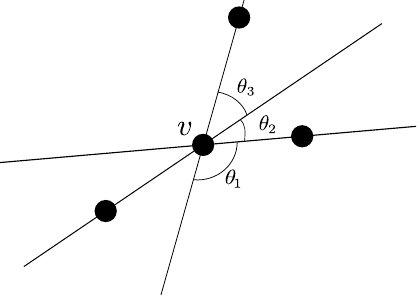}
        \caption{Using a vertex set of a plane graph to construct a bow tie at vertex,
        $v$. Lines are drawn through all vertices and then angles are computed
        between all adjacent pairs of lines. The smallest angle is chosen
        as~$\theta(v)$. Here,~$\theta(v)=\theta_2$.}
        \label{fig:edgebow}
    \end{figure}

    \emph{Step 3: Compute indegrees.}
    Using $B$ as the bow tie in \lemref{Indegree}, compute $\indeg{v}{\dir_1}$ and $\indeg{v}{\dir_2}$.
    Then, using~\lemref{edgeExist}, we determine whether
    $(v,v')$ exists by checking if $|\indeg{v}{\dir_1}-\indeg{v}{\dir_2}|=1$.
    If it does, the edge
    exists; if not, the edge does not.

    Repeating
    for all vertex pairs requires $O(n^2)$
    diagrams and discovers the edges of $\simComp$.
\end{proof}

The implications of \thmref{intComp} and \thmref{edgeEmbed} lead to our primary
result. We can find the embedding of the vertices $V$ by \thmref{intComp}
using three directional persistence diagrams. Furthermore, we can discover edges $E$
with $O(n^2)$ directional persistence diagrams by \thmref{edgeEmbed}. Thus, we can
reconstruct all edges and vertices of a one-dimensional simplicial~complex:

\begin{theorem}[Plane Graph Reconstruction]
Let~$\simComp$ be a plane graph with vertex set $V$ and edge set~$E$.
The vertices, edges, and exact embedding of~$\simComp$ can be determined
using persistence diagrams from~$O(n^2)$ different directions.
\end{theorem}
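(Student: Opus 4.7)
The plan is to observe that this final theorem is essentially a direct composition of the two preceding reconstruction results, \thmref{intComp} and \thmref{edgeEmbed}, so the proof will be a short assembly argument rather than an independent construction. I will state that the reconstruction proceeds in two phases: a vertex-recovery phase followed by an edge-recovery phase, and then simply add the direction counts.

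First, I would invoke \thmref{intComp} to recover the full set $V$ of vertices of $\simComp$ together with their coordinates in $\R^2$, using exactly three directional persistence diagrams: $\dir_1=(1,0)$, $\dir_2=(0,1)$, and the third direction $\dir_3$ produced by the construction of that theorem using the widest-and-shortest box in $B$. Since the proof of \thmref{intComp} produces not only the existence of the triple intersections but also the explicit assignment of filtration lines to vertices, at the end of this phase I have a labeled, embedded vertex set $V\subset\R^2$, which is exactly the hypothesis required to start the edge-recovery phase.

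Next, I would apply \thmref{edgeEmbed} to this recovered $V$. That theorem guarantees that, assuming $V$ is known, we can decide membership in $E$ for every pair $(v,v')\in V\times V$ using a total of $O(n^2)$ directional persistence diagrams—two per bow tie, one bow tie per ordered vertex pair—while the cyclic-ordering computation used to choose the global bow-tie width $\theta$ is combinatorial and consumes no diagrams. Since \lemref{edgeExist} is biconditional, the resulting edge set is exactly $E$, so combined with the embedded vertices from the first phase we recover both the abstract graph and its straight-line embedding.

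Finally, the direction count is simply $3 + O(n^2) = O(n^2)$, which matches the bound stated in the theorem. The only conceivable subtlety is ensuring the two phases are logically compatible: the three directions chosen for vertex reconstruction depend on the data itself (through $w$ and $h$), while the $O(n^2)$ directions for edge reconstruction depend on the already-recovered $V$; because the two phases are run sequentially and the edge-phase only requires $V$ as input (which is provided by the vertex phase), no interference arises, and no direction needs to be reused or re-chosen. Hence there is no real obstacle to overcome—the work is already done in \thmref{intComp} and \thmref{edgeEmbed}, and the present theorem is their clean combination.
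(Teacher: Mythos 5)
Your proposal is correct and matches the paper's own argument, which likewise just composes \thmref{intComp} (three diagrams for the vertices) with \thmref{edgeEmbed} ($O(n^2)$ diagrams for the edges) and sums the counts. Your added remark on the sequential compatibility of the two phases is a harmless elaboration of the same reasoning.
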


\section{Discussion}
\label{sec:discussion}

In this paper, we provide an algorithm to  reconstruct a plane graph
with $n$ vertices embedded in~$\R^2$. Our method uses $O(n^{2})$ persistence diagrams
by first determining vertex locations using only three directions, and, second,
determining edge existence based on height filtrations and vertex degrees.
Moreover, if we have an oracle that can return a diagram given a direction in $O(T)$ time,
then constructing the vertices takes $O(T + n \log n)$ and reconstructing the edges takes
takes $O(Tn^2)$ time.

This approach extends to several avenues for future work. First, we plan to generalize
these reconstruction results to higher dimensional simplicial complexes. We can show
that the vertices of a simplicial complex $K$ in~$\R^d$ can be reconstructed in
$O(dT + n^d)$ time using the complete arrangement of hyperplanes and $(d+1)$
directional persistence diagrams.  We conjecture that this bound can be improved to
$O(dT + dn \log n)$ using the same observation that allows us to do the final step of
the vertex reconstruction in linear time.  We have a partial proof in this direction, and
can likewise extend the bow tie idea to higher dimensions, but the number of directions
grows quite quickly. Second, we conjecture that we can reconstruct these plane graphs
with a sub-quadratic number of height filtrations by utilizing more information from each
height filtration. Third, we suspect a similar approach can be used to infer other graph
metrics, such as classifying vertices into connected components. Intuitively, determining
such metrics should require fewer persistence diagrams than required for a complete
reconstruction. Finally, we plan to provide an implementation for reconstruction that
integrates with existing TDA software.

\paragraph{Acknowledgements}
This material is based upon work supported by the National Science Foundation
under the following grants: CCF 1618605 (BTF, SM), DBI 1661530 (BTF, DLM, LW), DGE 1649608 (RLB),
and DMS 1664858 (RLB, BTF, AS, JS).
Additionally, RM thanks the Undergraduate Scholars Program.
All authors thank the CompTaG club at Montana State University and the reviewers
for their thoughtful feedback on this work.

\bibliography{references}

\begin{thebibliography}{10}

\bibitem{maps}
Mahmuda Ahmed, Sophia Karagiorgou, Dieter Pfoser, and Carola Wenk.
\newblock Map construction algorithms.
\newblock In {\em Map Construction Algorithms}, pages 1--14. Springer, 2015.

\bibitem{iti}
Mahmuda Ahmed and Carola Wenk.
\newblock Constructing street networks from {GPS} trajectories.
\newblock In {\em European Symposium on Algorithms}, pages 60--71. Springer,
  2012.

\bibitem{curry2018directions}
Justin Curry, Sayan Mukherjee, and Katharine Turner.
\newblock How many directions determine a shape and other sufficiency results
  for two topological transforms.
\newblock arXiv:1805.09782, 2018.

\bibitem{dey2018graph}
Tamal~K. Dey, Jiayuan Wang, and Yusu Wang.
\newblock Graph reconstruction by discrete {M}orse theory.
\newblock arXiv:1803.05093, 2018.

\bibitem{edelsbrunner2010computational}
Herbert Edelsbrunner and John Harer.
\newblock {\em Computational Topology: {A}n Introduction}.
\newblock American Mathematical Society, 2010.

\bibitem{ge2011data}
Xiaoyin Ge, Issam~I Safa, Mikhail Belkin, and Yusu Wang.
\newblock Data skeletonization via {R}eeb graphs.
\newblock In {\em Advances in Neural Information Processing Systems}, pages
  837--845, 2011.

\bibitem{ghrist2018euler}
Robert Ghrist, Rachel Levanger, and Huy Mai.
\newblock Persistent homology and {E}uler integral transforms.
\newblock arXiv:1804.04740, 2018.

\bibitem{giusti2015clique}
Chad Giusti, Eva Pastalkova, Carina Curto, and Vladimir Itskov.
\newblock Clique topology reveals intrinsic geometric structure in neural
  correlations.
\newblock {\em Proceedings of the National Academy of Sciences},
  112(44):13455--13460, 2015.

\bibitem{ili}
Sophia Karagiorgou and Dieter Pfoser.
\newblock On vehicle tracking data-based road network generation.
\newblock In {\em SIGSPATIAL '12: Proceedings of the 20th International
  Conference on Advances in Geographic Information Systems}, pages 89--98. ACM,
  2012.

\bibitem{kegl2000learning}
Bal{\'a}zs K{\'e}gl, Adam Krzyzak, Tam{\'a}s Linder, and Kenneth Zeger.
\newblock Learning and design of principal curves.
\newblock {\em IEEE Transactions on Pattern Analysis and Machine Intelligence},
  22(3):281--297, 2000.

\bibitem{lee2017quantifying}
Yongjin Lee, Senja~D. Barthel, Pawe{\l} D{\l}otko, S.~Mohamad Moosavi, Kathryn
  Hess, and Berend Smit.
\newblock Quantifying similarity of pore-geometry in nanoporous materials.
\newblock {\em Nature Communications}, 8:15396, 2017.

\bibitem{verma2011slow}
David~L. Millman and Vishal Verma.
\newblock A slow algorithm for computing the {G}abriel graph with double
  precision.
\newblock {\em CCCG '11: Proceedings of the 23rd Annual Canadian Conference on
  Computational Geometry}, 2011.

\bibitem{nicolau2011topology}
Monica Nicolau, Arnold~J. Levine, and Gunnar Carlsson.
\newblock Topology based data analysis identifies a subgroup of breast cancers
  with a unique mutational profile and excellent survival.
\newblock {\em Proceedings of the National Academy of Sciences},
  108(17):7265--7270, 2011.

\bibitem{rizvi2017single}
Abbas~H. Rizvi, Pablo~G. Camara, Elena~K. Kandror, Thomas~J. Roberts, Ira
  Schieren, Tom Maniatis, and Raul Rabadan.
\newblock Single-cell topological {RNA}-seq analysis reveals insights into
  cellular differentiation and development.
\newblock {\em Nature Biotechnology}, 35(6):551, 2017.

\bibitem{turner2014persistent}
Katharine Turner, Sayan Mukherjee, and Doug~M. Boyer.
\newblock Persistent homology transform for modeling shapes and surfaces.
\newblock {\em Information and Inference: A Journal of the IMA}, 3(4):310--344,
  2014.

\bibitem{rootreconstruction}
Ying Zheng, Steve Gu, Herbert Edelsbrunner, Carlo Tomasi, and Philip Benfey.
\newblock Detailed reconstruction of 3d plant root shape.
\newblock {\em Proceedings of the IEEE International Conference on Computer
  Vision}, pages 2026--2033, 11 2011.

\end{thebibliography}

\section*{Appendix}
\appendix

\section{Example of Reconstructing a Plane Graph}
\label{sec:Example}

We give an example of reconstructing a plane graph.
Consider the complex given in \figref{exampleVertex}.

\paragraph{Vertex Reconstruction}

First, we find vertex locations using the algorithm described in \secref{vRec}.
We need to choose pairwise linearly independent
vectors~$\dir_1, \dir_2$ and~$\dir_3$ such that only~$n$ three-way intersections in
$A = \pLines{\dir_1}{V} \cup \pLines{\dir_2}{V} \cup \pLines{\dir_3}{V}$
exist; note that in this example,~$n=4$.  Using the persistence diagrams from
height filtrations in directions $\dir_1=(0,1)$ and $\dir_2=(1,0)$, we construct the set
of lines~$\pLines{\dir_1}{V} \cup \pLines{\dir_2}{V}$. This results in~$n^2 = 16$ possible
locations for the vertices at the intersections in~$A$. We show these filtration lines and
intersections in \figref{Vertx-second_dir}. Next, we compute the third direction~$\dir_3$
using the algorithm outlined in \thmref{intComp}. To do this, we need to find the greatest
horizontal distance between two vertical lines,~$d_1 = 2$ and the least vertical distance
between two horizontal lines,~$d_2=1$. Then, we use these to choose a direction~$\dir_3$
perpendicular to~$\dir_{*} = (d_1, \frac{d_2}{2}) = (2, \frac{1}{2})$
(e.g.,~$\dir_3 = (\frac{-1}{\sqrt{17}},
\frac{4}{\sqrt{17}}) \in \sph^1$). Then, the four three-way intersections
 in~$\pLines{\dir_1}{V} \cup \pLines{\dir_2}{V} \cup \pLines{\dir_3}{V}$
identify all Cartesian coordinates of the original complex. We show filtration lines from
all three directions in \figref{Vertex-third_dir}.

\paragraph{Edge Reconstruction}
Next, we reconstruct all edges as described in \secref{eRec}. In order to do so, we first
find the~$\theta$ we will use to construct bow ties. To do this, we examine each
vertex~$v$ in turn, finding~$\theta(v)$, the minimum angle between adjacent pairs of
lines through~$v$ and~$v' \in V - \{v \}$. Ordering~$v$ by increasing $x$-coordinate,
we find~$\theta(v)$ to be approximately ~$0.237, 0.219, 0.399$, and~$0.180$ radians, respectively. Then, we
take~$\theta$ to be less than the minimum of these, i.e.~$<0.180 radians$.

Now, for each of the~$\frac{n(n-1)}{2}$ pairs of vertices~$(v,v') \in V^2$, we construct
a bow tie~$B$ and then use this bow tie to determine whether an edge exists between
the two vertices. We go through two examples: one for a pair of vertices that does have an
edge between, and one for a pair that does not. First, consider the pair~$v = (0.25,0)$
and~$v' = (1,1)$. To construct their bow tie, we first find the unit vector perpendicular
to the vector that points from~$v$ to~$v'$, which is~$\dir = (-0.8,0.6)$. Now, we
find~$\dir_1,\dir_2$ such that they make angles~$\theta$ with~$\dir$.
We choose~$\dir_1 = (-0.956, 0.293)$ and~$\dir_2 = (-0.433,0.902)$. Now, by
\lemref{Indegree}, we can use the persistence diagrams from these two directions to
compute~$\indeg{v}{\dir_1}$ and~$\indeg{v}{\dir_2}$. We observe that $\dgm{0}{\dir_1}$
contains exactly one birth-death pair~$(x,y)$ such that~$y=v \cdot \dir_1$ and
$\dgm{1}{\dir_1}$ has one birth-death pair such
that~$x=v \cdot \dir_1$. Thus,~$\indeg{v}{\dir_1}=2$. On the other hand, $\dgm{0}{\dir_2}$
contains exactly one birth-death pair~$(x,y)$ such that~$y=v \cdot \dir_2$,
but~$\dgm{1}{\dir_2}$ contains no birth-death pair such that~$x=v \cdot \dir_2$.
So~$\indeg{v}{\dir_2}=1$. Now, since~$|\indeg{v}{\dir_1}-\indeg{v}{\dir_2}|=1$, we know
that~$(v,v') \in E$, by \lemref{edgeExist}.

For the second example, consider the pair of vertices~$v = (0.25,0)$ and~$v' = (-1,2)$.
Again, we construct their bow tie by finding a unit vector perpendicular to the vector
pointing from~$v$ to~$v'$. We choose this~$s=(0.848,0.530)$. Then, the~$\dir_1$
and~$\dir_2$ which form angle~$\theta<0.180$ radians (e.g $\theta=.170$) with~$s$
are~$ \dir_1 = (0.968, 0.248)$ and~$\dir_2 = (0.472, 0.882)$. Again by \lemref{Indegree},
we examine the zero- and one-dimensional persistence diagrams from these two directions
to compute the indegree from each direction for vertex~$v$. In $\dgm{0}{\dir_1}$, we
have one pair~$(x,y)$ which dies at~$y=v \cdot \dir_1$, but in~$\dgm{1}{\dir_1}$, no pair
is born at~$x=v \cdot \dir_1$. So~$\indeg{v}{\dir_1}=1$. We see the exact same
for~$\dir_2$, which means that~$|\indeg{v}{\dir_1}-\indeg{v}{\dir_2}|=0$. Since
\lemref{edgeExist} tells us that we have an edge between~$v$ and~$v'$ only if the
absolute value of the difference of indegrees is one, we know that there is no edge between
vertices~$(0.25,0)$ and~$(-1,2)$.

In order to reconstruct all edges, we perform the same computations for all pairs
of vertices.

\todo{update figures so that the (0,1) and (1,0) are swapped (follow what's going on in the paper)}
\begin{figure*}
\centering
\begin{subfigure}{.3\linewidth}
  \centering
  \includegraphics[width=\linewidth,height=.8\linewidth]{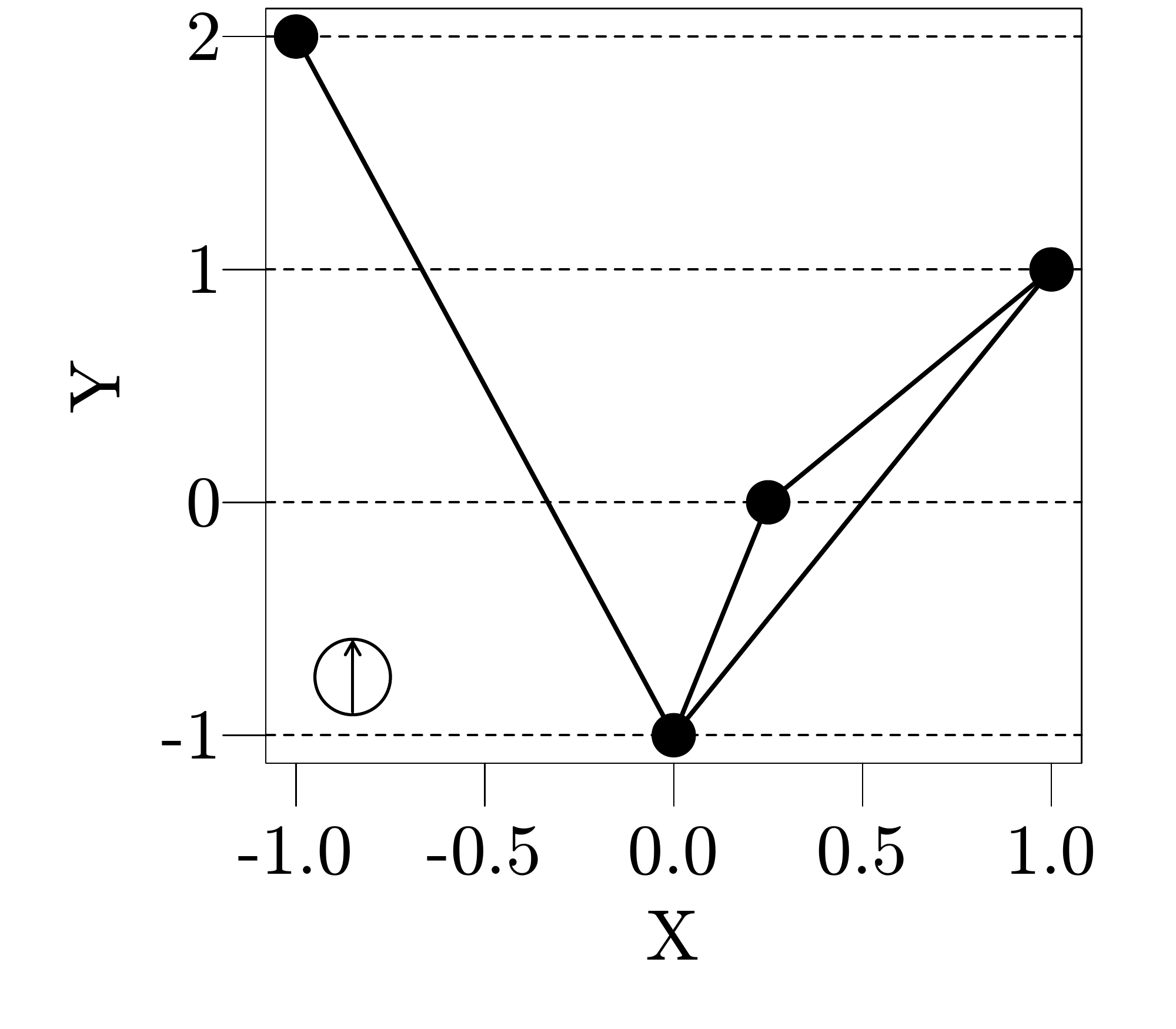}
  \caption{Filtration lines for $\dir_1$}
\end{subfigure}%
\begin{subfigure}{.3\linewidth}
  \centering
  \includegraphics[width=\linewidth,height=.8\linewidth]{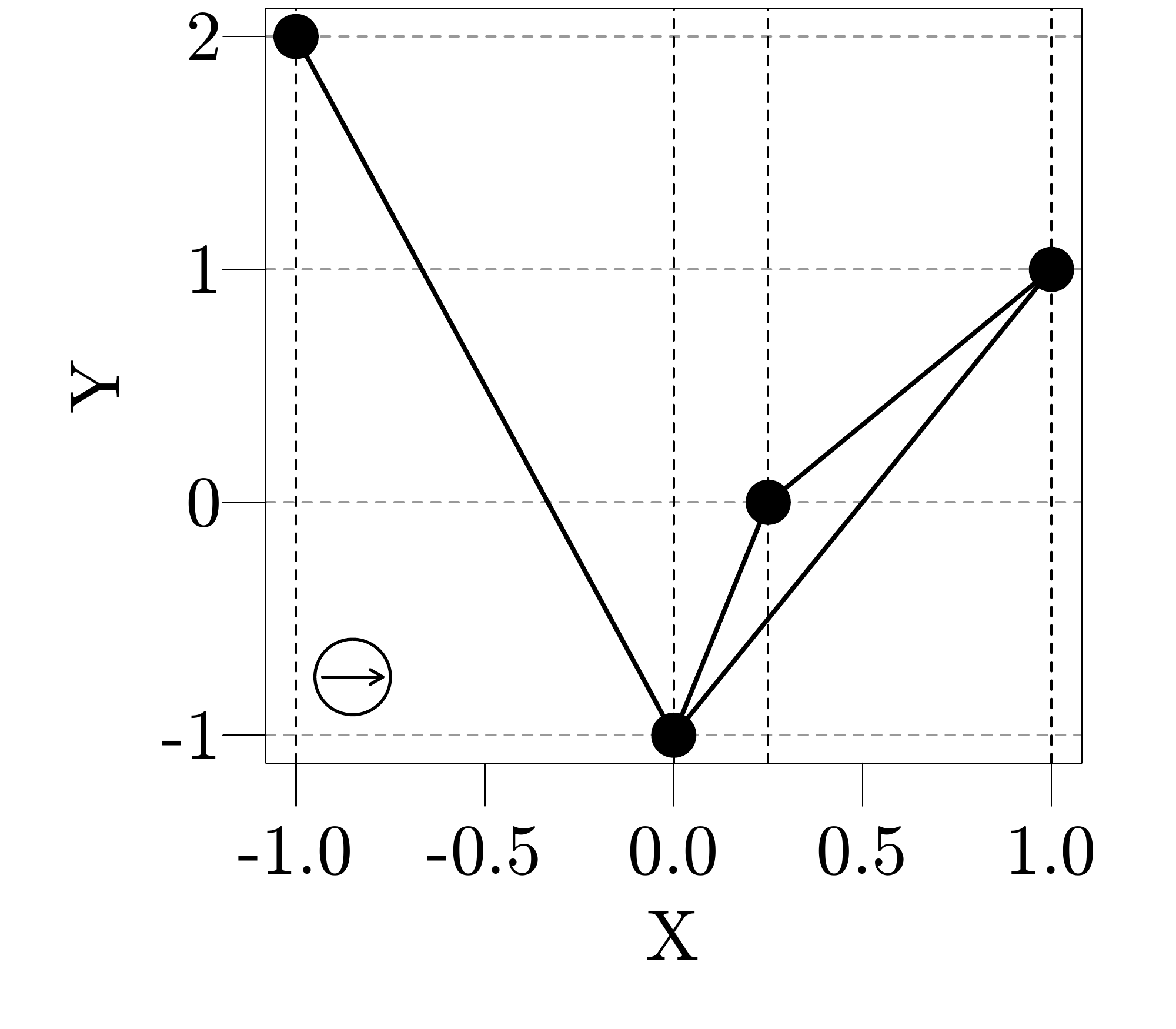}
  \caption{Filtration lines for $\dir_2$}
  \label{fig:Vertx-second_dir}
\end{subfigure}%
\begin{subfigure}{.3\linewidth}
  \centering
  \includegraphics[width=\linewidth,height=.8\linewidth]{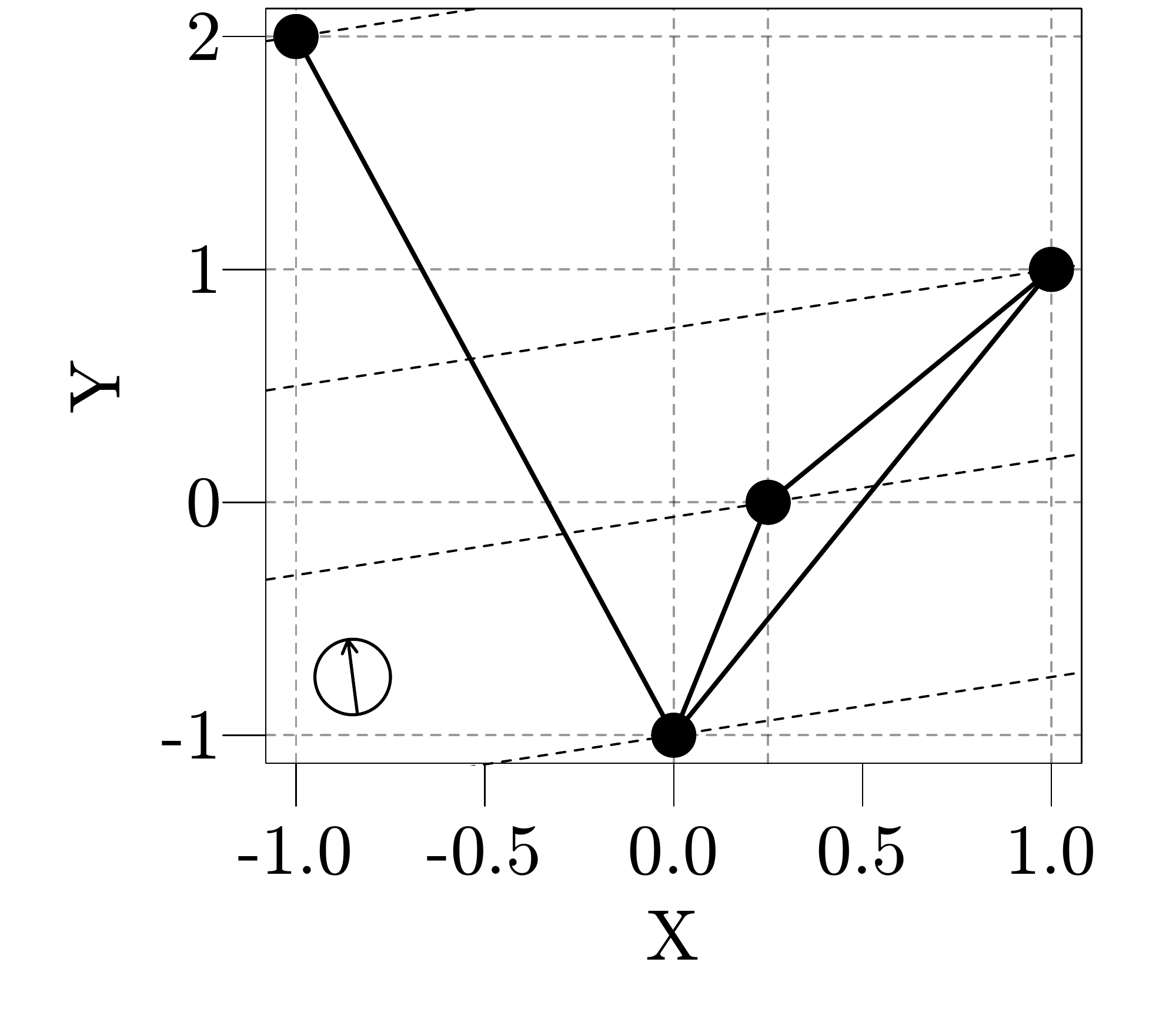}
  \caption{Filtration lines for $\dir_3$}
    \label{fig:Vertex-third_dir}
\end{subfigure}\\
\vspace{1cm}
\begin{subfigure}{.3\linewidth}
  \centering
  \includegraphics[width=\linewidth,height=.8\linewidth]{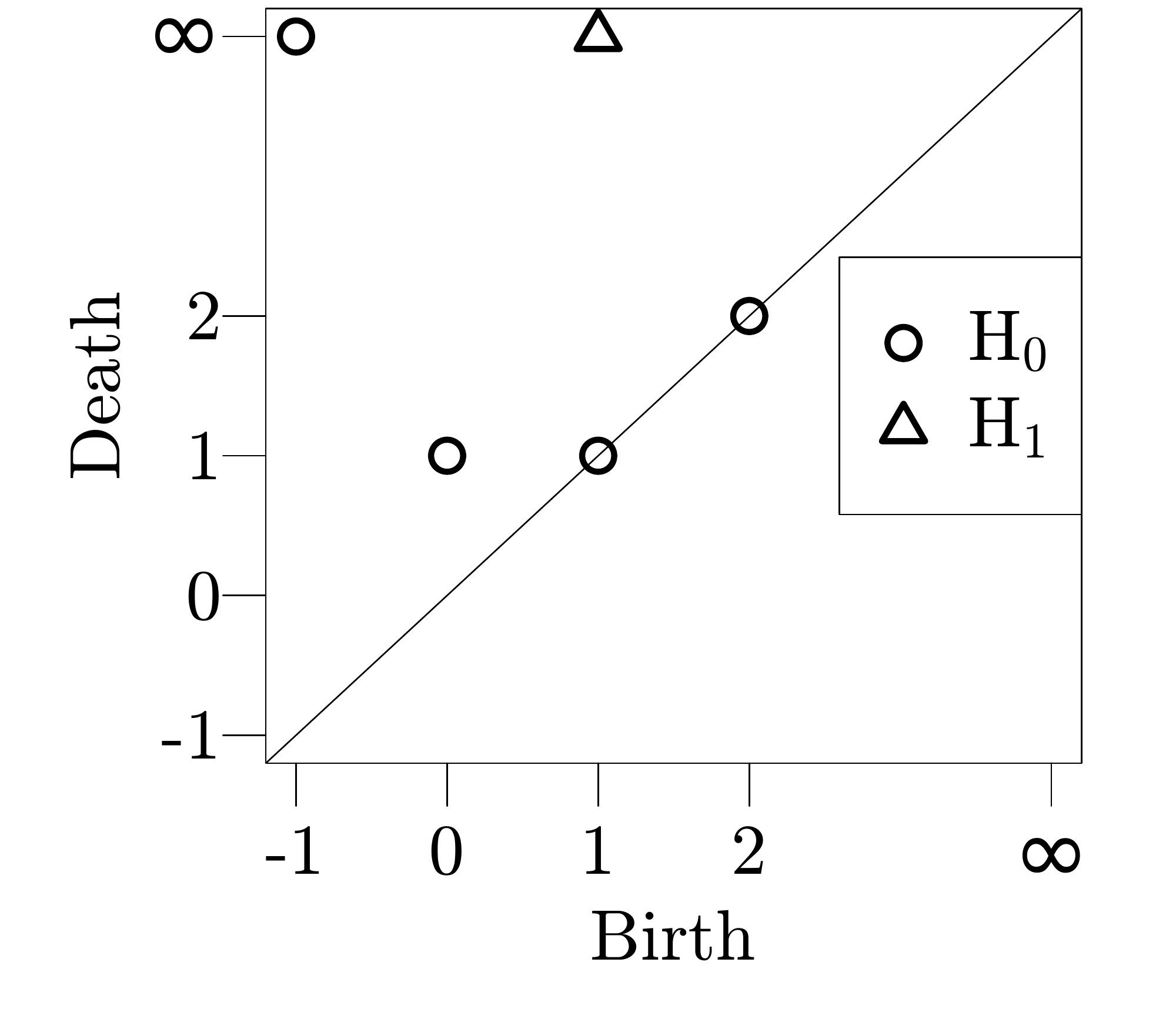}
  \caption{Diagrams for $\dir_1$}
\end{subfigure}
\begin{subfigure}{.3\linewidth}
  \centering
  \includegraphics[width=\linewidth,height=.8\linewidth]{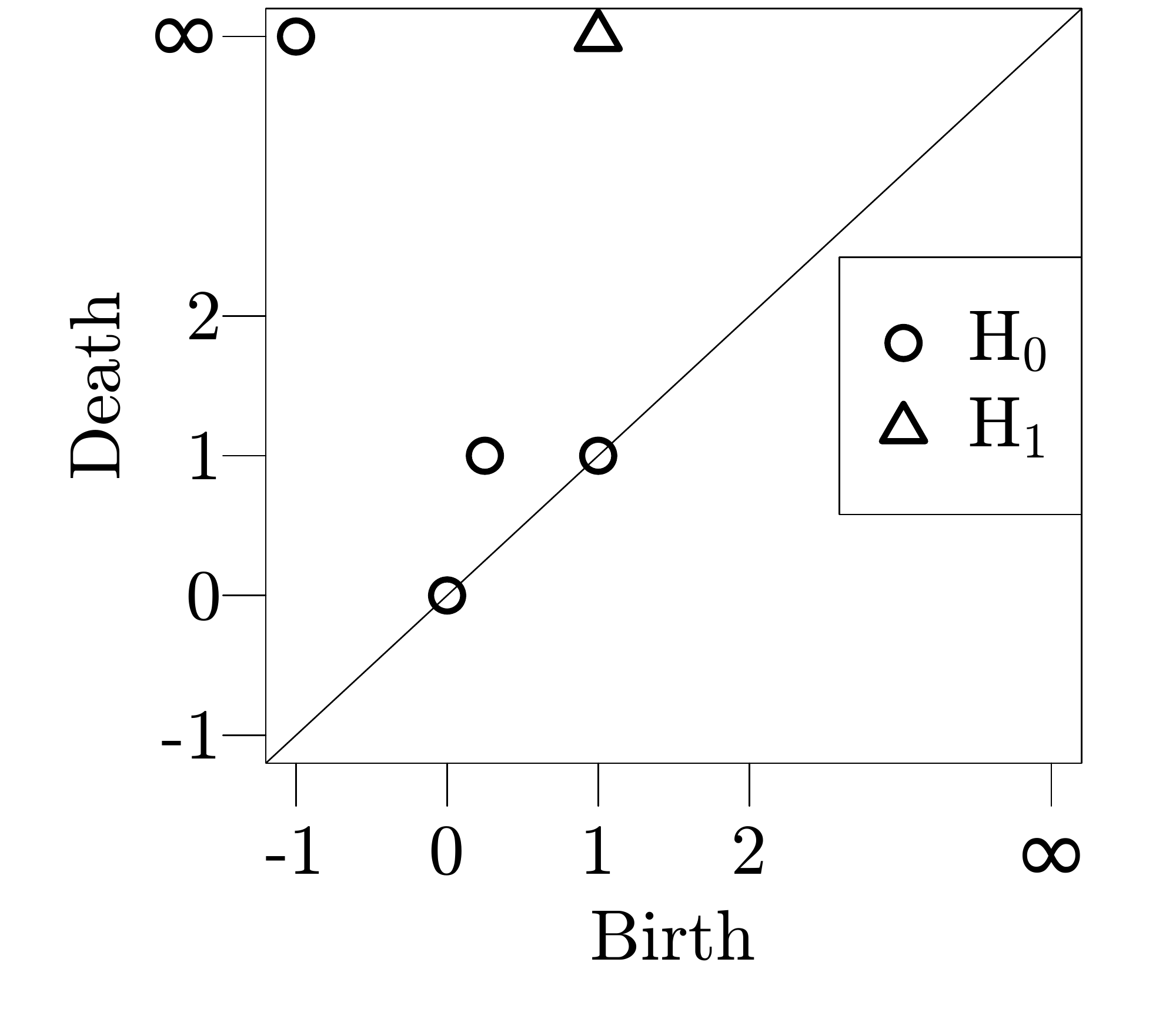}
  \caption{Diagrams for $\dir_2$}
\end{subfigure}%
\begin{subfigure}{.3\linewidth}
  \centering
  \includegraphics[width=\linewidth,height=.8\linewidth]{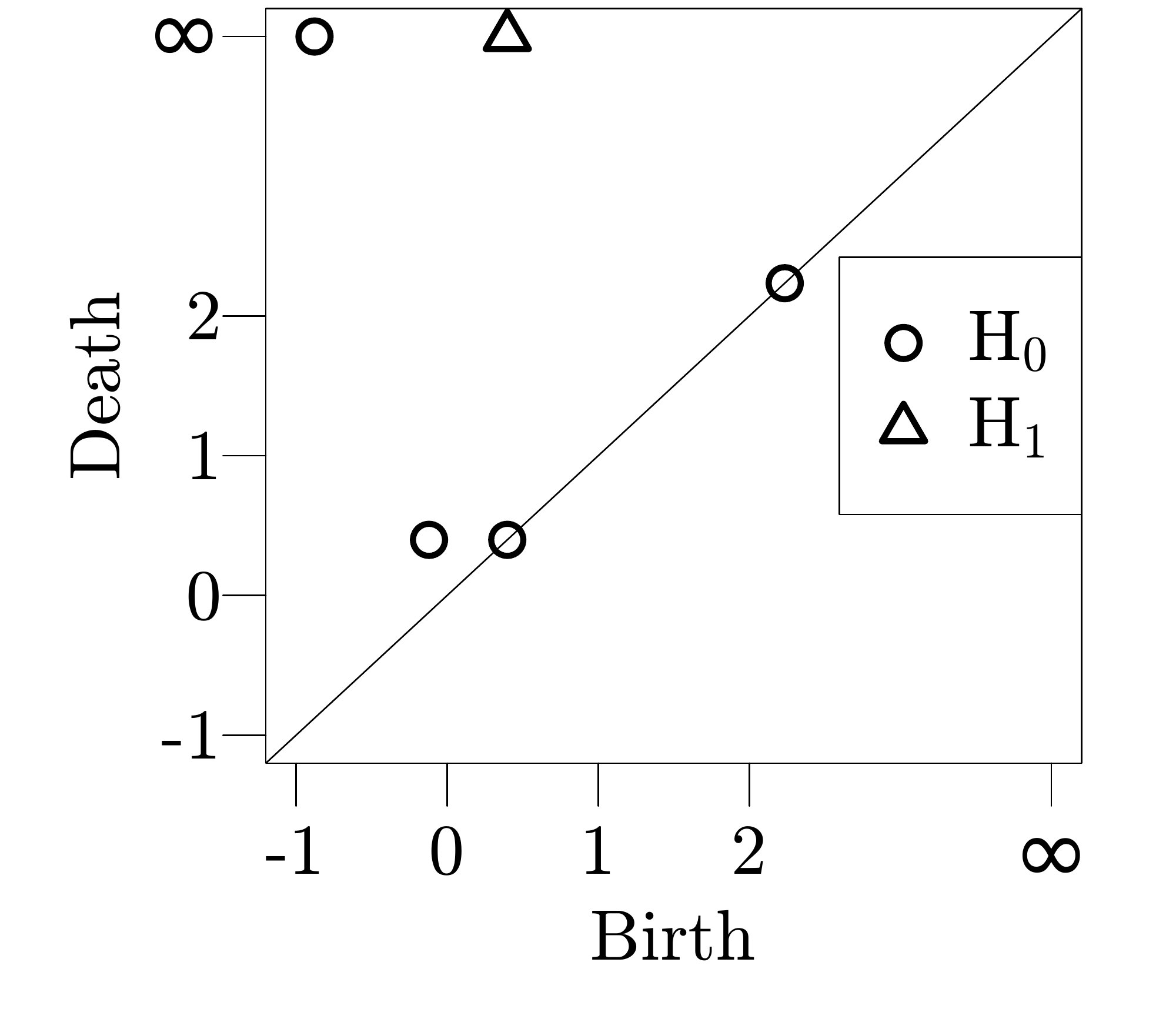}
  \caption{Diagrams for $\dir_3$}
\end{subfigure}
  \caption{Example of vertex reconstruction from three directions, $\dir_1$,
  $\dir_2$ and $\dir_3$ with corresponding persistence diagrams built for
  height filtrations from these directions. The filtration lines are the dotted
  lines superimposed over the complex.}
  \label{fig:exampleVertex}
\end{figure*}

\begin{figure*}
\centering
\begin{subfigure}{.3\linewidth}
  \centering
  \includegraphics[width=\linewidth,height=.8\linewidth]{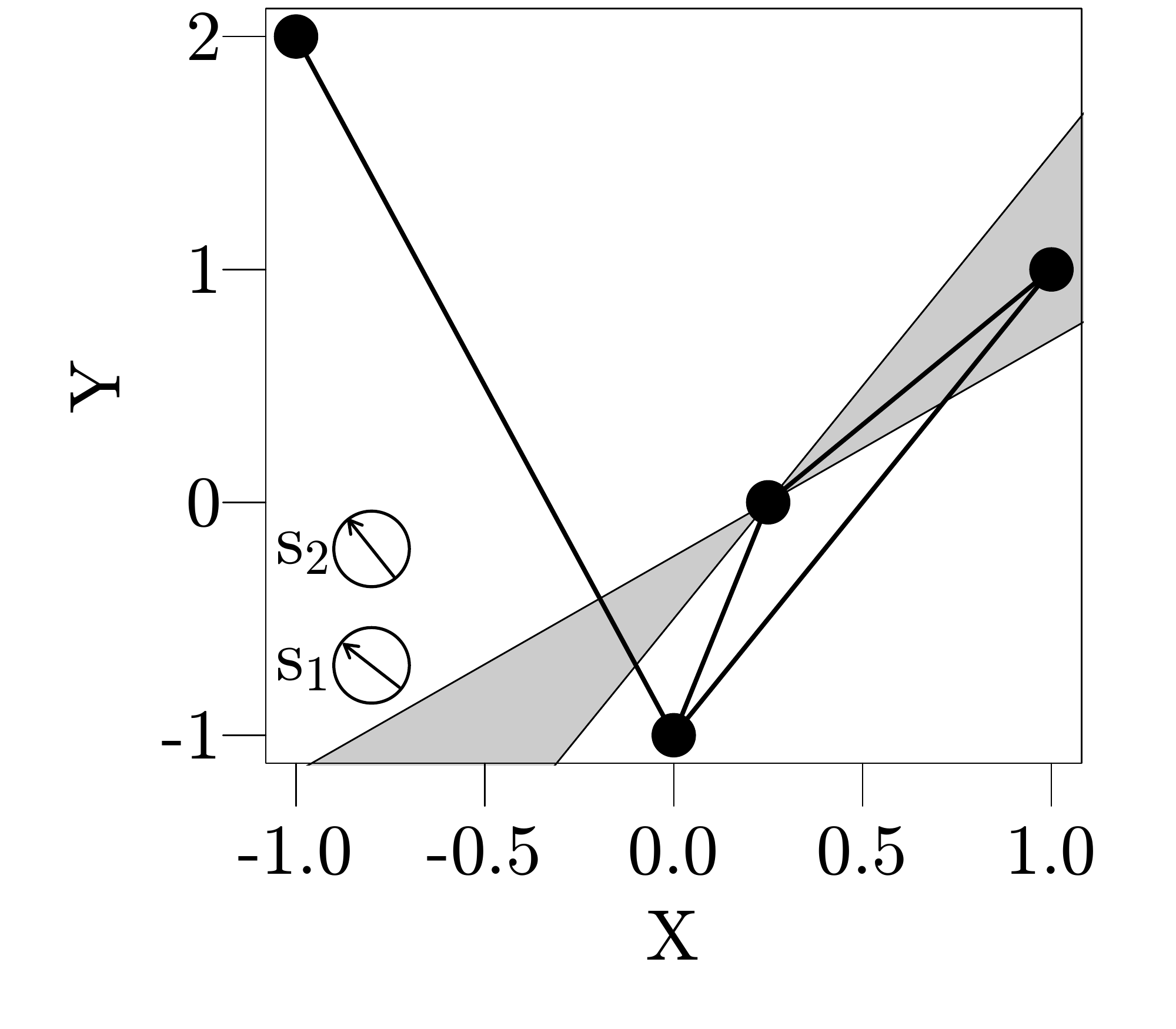}
  \caption{Bow tie lines for $\dir_1$ and $\dir_2$}
\end{subfigure}%
\begin{subfigure}{.3\linewidth}
  \centering
  \includegraphics[width=\linewidth,height=.8\linewidth]{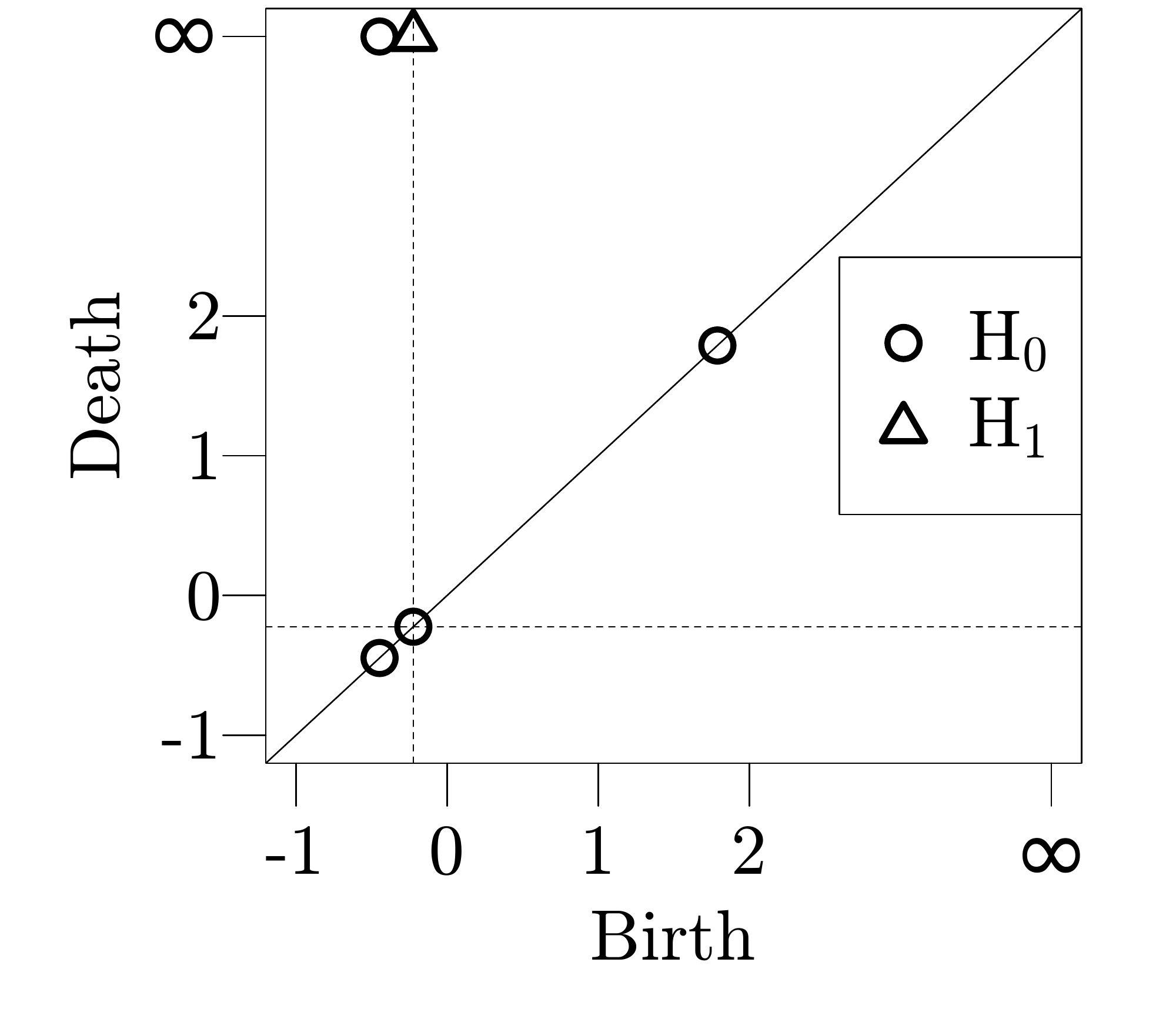}
  \caption{Diagram for $\dir_1$}
  \label{fig:Edge-second_dir}
\end{subfigure}%
\begin{subfigure}{.3\linewidth}
  \centering
  \includegraphics[width=\linewidth,height=.8\linewidth]{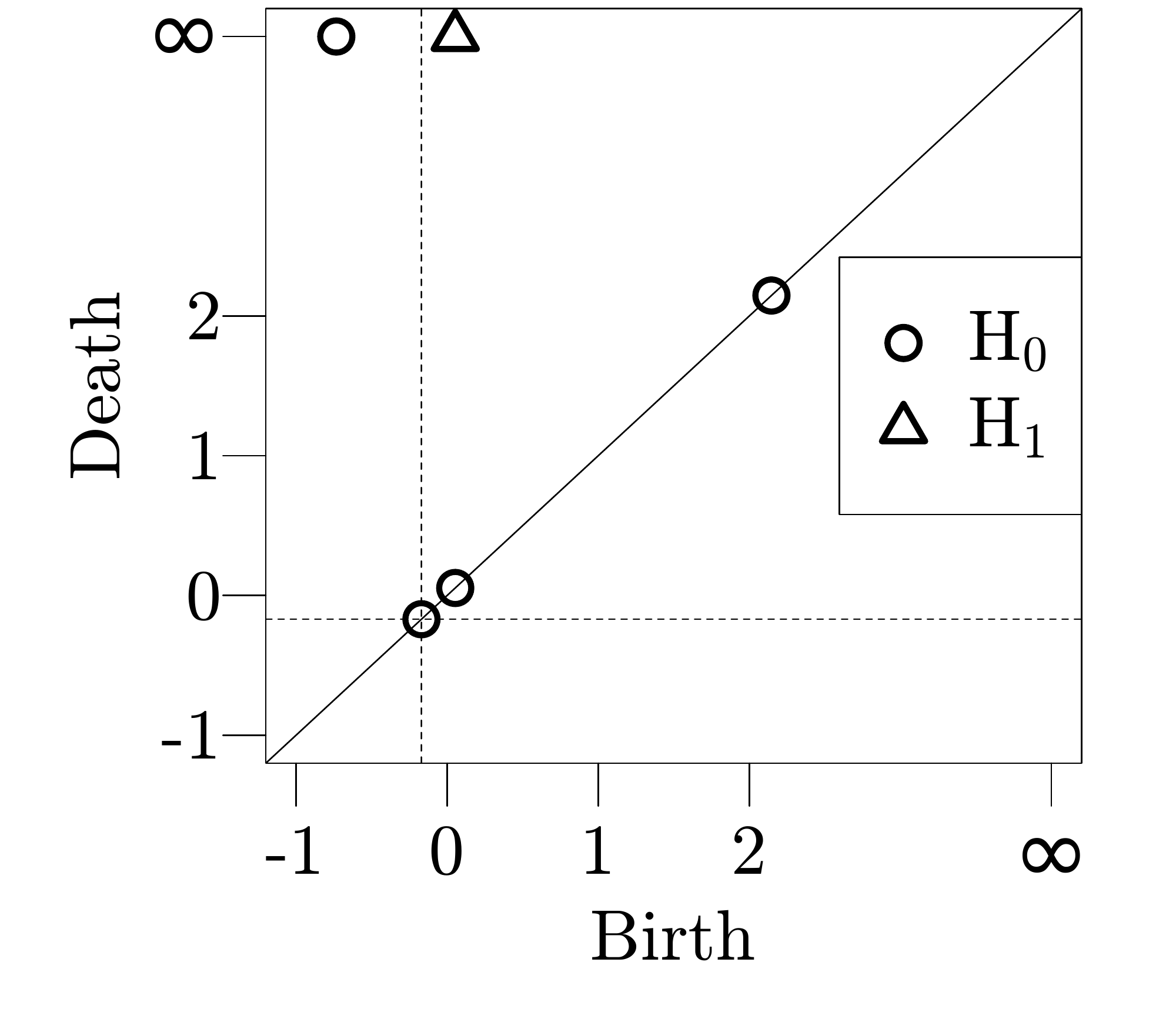}
  \caption{Diagram for $\dir_2$}
    \label{fig:Edge-third_dir}
\end{subfigure}\\
\vspace{1cm}
\begin{subfigure}{.3\linewidth}
  \centering
  \includegraphics[width=\linewidth,height=.8\linewidth]{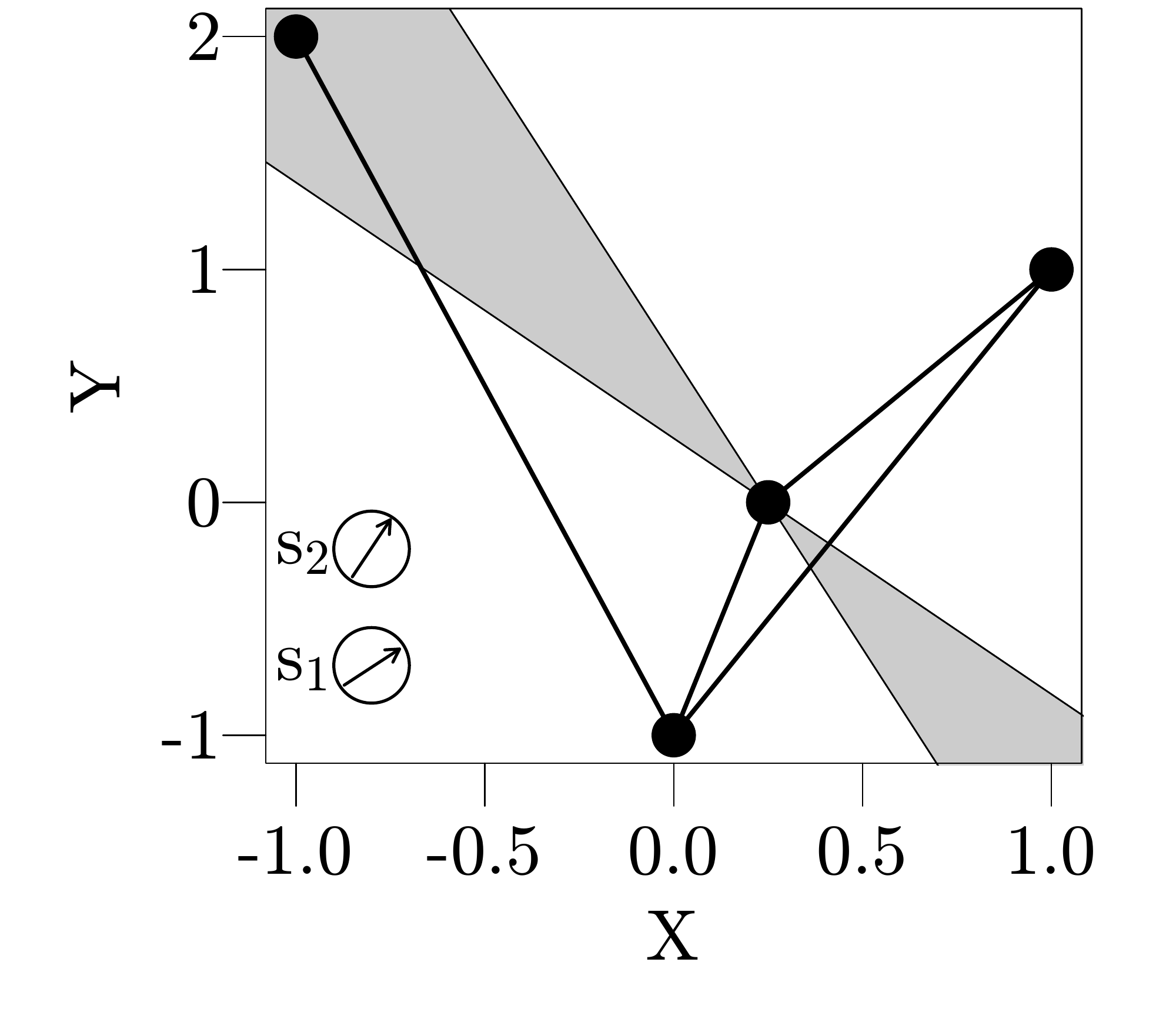}
  \caption{Bow tie lines for $\dir_1$ and $\dir_2$}
\end{subfigure}
\begin{subfigure}{.3\linewidth}
  \centering
  \includegraphics[width=\linewidth,height=.8\linewidth]{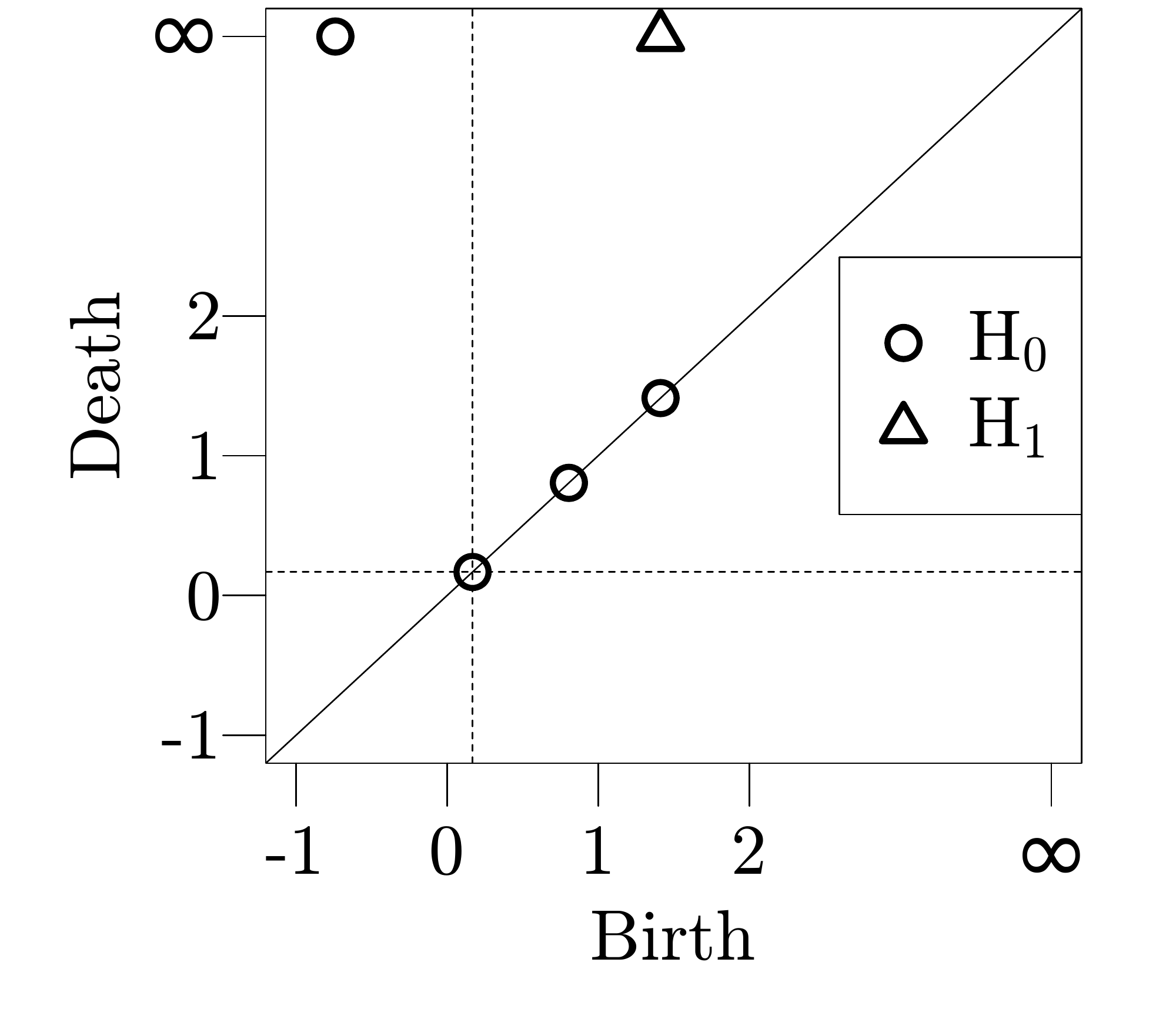}
  \caption{Diagram for $\dir_1$}
\end{subfigure}%
\begin{subfigure}{.3\linewidth}
  \centering
  \includegraphics[width=\linewidth,height=.8\linewidth]{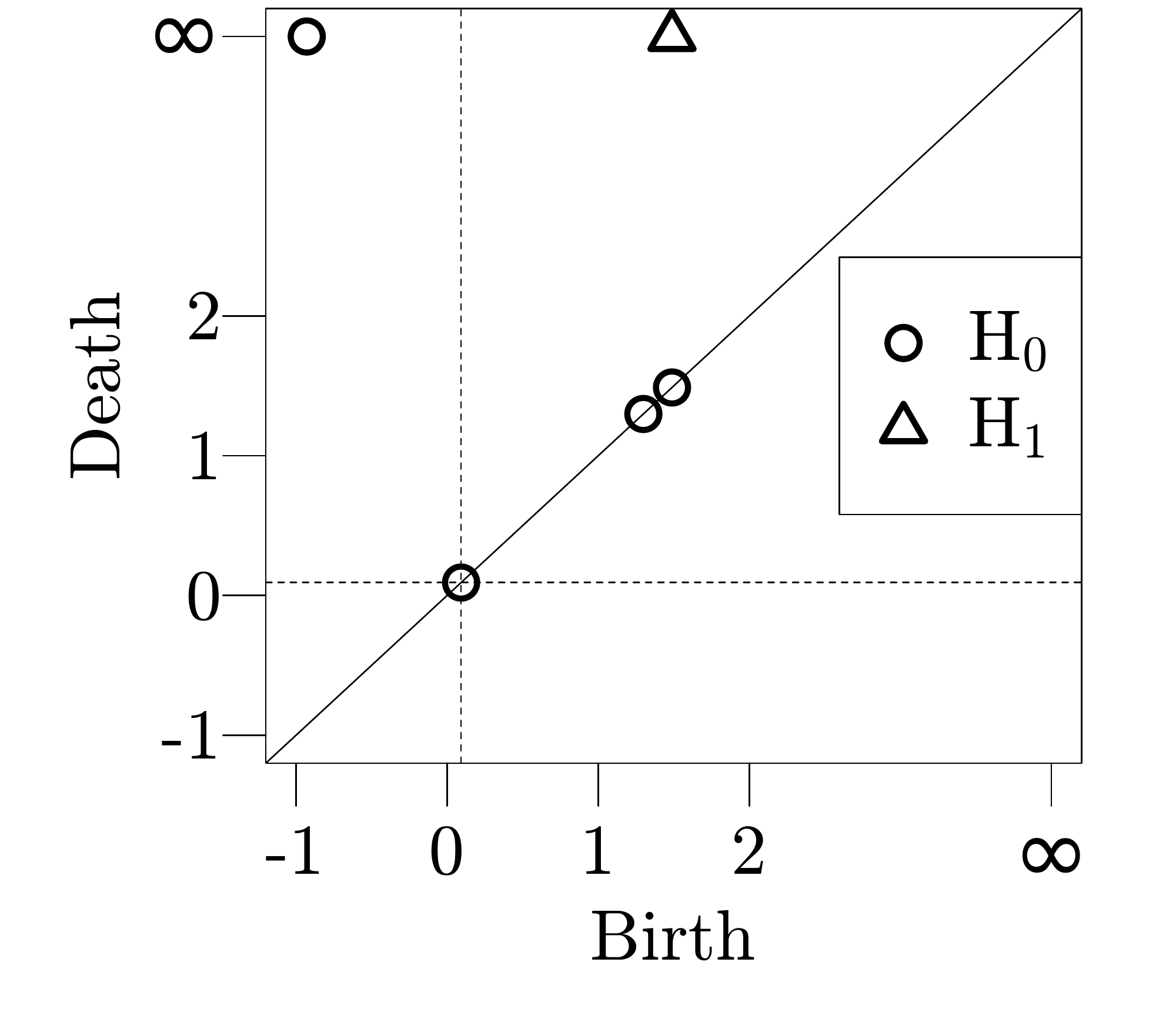}
  \caption{Diagram for $\dir_2$}
\end{subfigure}
  \caption{Example of edge reconstruction for two edges. The first edge (top
  row) exists while the second edge (bottom row) does not. The bow tie
  is given on the left while the persistence diagrams $\dgm{0}{\dir_1}$ and
  $\dgm{1}{\dir_1}$ are given in the middle and the persistence diagrams
  $\dgm{0}{\dir_2}$ and $\dgm{1}{\dir_2}$ are given on the right. The dotted lines
  indicate $v \cdot \dir_1$ and $v \cdot \dir_2$ in diagrams for $\dir_1$ and $\dir_2$ respectively.}
  \label{fig:exampleEdge}
\end{figure*}

\end{document}